\newcommand{\<}{\langle}
\renewcommand{\>}{\rangle}
\newtheorem{lemma}{Lemma}
\newtheorem{assumption}{Assumption}
\newtheorem{remark}{Remark}
\newtheorem{proposition}{Proposition}
\providecommand{\norm}[1]{\lVert#1\rVert}
\DeclareMathOperator{\essinf}{ess\,inf}
\DeclareMathOperator{\esssup}{ess\,sup}
\newcommand{\blind}{0}
\begin{document}

\def\spacingset#1{\renewcommand{\baselinestretch}%
{#1}\small\normalsize} \spacingset{1}

%%%%%%%%%%%%%%%%%%%%%%%%%%%%%%%%%%%%%%%%%%%%%%%%%%%%%%%%%%%%%%%%%%%%%%%%%%%%%%

\if0\blind
{
  \title{\bf Covariance--based rational approximations of fractional SPDEs for computationally efficient Bayesian inference}
  \author{David Bolin, Alexandre B. Simas, and Zhen Xiong \hspace{.2cm}\\
    Computer, Electrical and Mathematical Sciences and Engineering \\ Division, 
    King Abdullah University of Science and Technology\\
    Thuwal 23955-6900, Saudi Arabia}
  \maketitle
} \fi

\if1\blind
{
  \bigskip
  \bigskip
  \bigskip
  \begin{center}
    {\LARGE\bf Title}
\end{center}
  \medskip
} \fi

\bigskip
\begin{abstract}
The stochastic partial differential equation (SPDE) approach is widely used for modeling large spatial datasets. It is based on representing a Gaussian random field $u$ on $\mathbb{R}^d$ as the solution of an elliptic SPDE $L^\beta u = \mathcal{W}$ where $L$ is a second-order differential operator, $2\beta \in \mathbbm{N}$ is a positive parameter that controls the smoothness of $u$ and $\mathcal{W}$ is  Gaussian white noise. A few approaches have been suggested in the literature to extend the approach to allow for any smoothness parameter satisfying $\beta>d/4$. Even though those approaches work well for simulating SPDEs with general smoothness, they are less suitable for Bayesian inference since they do not provide approximations which are Gaussian Markov random fields (GMRFs) as in the original SPDE approach. We address this issue by proposing a new method based on approximating the covariance operator $L^{-2\beta}$ of the Gaussian field $u$ by a finite element method combined with a rational approximation of the fractional power. This results in a numerically stable GMRF approximation which can be combined with the integrated nested Laplace approximation (INLA) method for fast Bayesian inference. A rigorous convergence analysis of the method is performed and the accuracy of the method is investigated with simulated data. Finally, we illustrate the approach and corresponding implementation in the R package rSPDE via an application to precipitation data which is analyzed by combining the rSPDE package with the R-INLA software for full Bayesian inference. 
\end{abstract}

\noindent%
{\it Keywords:} Gaussian process, Gaussian Markov random field, SPDE, R-INLA, spatial statistics, latent Gaussian model
\vfill

\newpage
\spacingset{1} 

\section{Introduction}
\label{sec:intro}

Handling many observations from a Gaussian random field in spatial statistics can be challenging since the related computational tasks involve factorizations of large covariance matrices which are usually dense. This is often referred to as the ``big $N$ problem'' \citep{banerjee2003hierarchical}, and various approaches have been suggested to handle the computational issues \citep[see, e.g.,][for a recent comparison]{heaton2019}.
One of the most widely used methods is the SPDE approach by \cite{lindgren2011}. This is based on the fact that a centered Gaussian random field $u$ on the spatial domain $\mathcal{D} = \mathbb{R}^d$ with an isotropic Mat\'ern covariance function \citep{matern60}, 
\begin{equation}\label{eq:matern}
	\varrho(\boldsymbol{s}) = \frac{\sigma^2}{2^{\nu-1}\Gamma(\nu)}(\kappa \norm{\boldsymbol{s}})^\nu K_\nu(\kappa \norm{\boldsymbol{s}}),\quad \boldsymbol{s}\in\mathbb{R}^d,
\end{equation}
can be represented as a solution to the stochastic partial differential equation (SPDE) 
\begin{equation}\label{eq:spde_intro}
	(\kappa^2-\Delta)^{\beta} (\tau u) = \mathcal{W} \quad \text{ in $\mathcal{D}$}.
\end{equation}
Here, $\Gamma(\cdot)$ is the Gamma function, $K_\nu$ is a
modified Bessel function of the second kind, $\Delta$ is the Laplace operator, and $\mathcal{W}$ is Gaussian white noise. The parameter  $\kappa>0$ controls the practical correlation range, $\sigma^2$ is the variance,  $\tau^2 = \Gamma(\nu) / (\sigma^2 \kappa^{2\nu} (4\pi)^{d/2} \Gamma(\nu + d/2))$, and the fractional power $\beta$ is related to the smoothness parameter $\nu>0$ via the relation $2\beta = \nu + d/2$ \citep{whittle1963stochastic}. \cite{lindgren2011} used this representation to construct computationally efficient Gaussian Markov Random Field (GMRF) approximations of Gaussian Mat\'ern fields by considering the SPDE on a bounded domain $\mathcal{D}$, restricting the smoothness to $2\beta\in\mathbb{N}$, and then performing a finite element method (FEM) discretization. %of the SPDE.  

The SPDE approach has become widely used in applications, and has initiated a great number of extensions and generalizations  \citep{lindgren2022spde}. The reason for this is not only the computational benefits, but also that it provides a flexible framework for defining more sophisticated models for spatial data. It, for example, facilitates the construction of non-stationary Gaussian random fields by allowing the parameters $\kappa$ and $\tau$ to be spatially varying \citep{lindgren2011,fuglstad2015does}, and allows for the construction of Mat\'ern-like random fields on more general manifolds by defining such fields via the SPDE \eqref{eq:spde_intro} posed on the manifold \citep{lindgren2011,bolin11}. 

One of the main criticisms of the SPDE approach is the requirement $2\beta \in \mathbbm{N}$, which restricts the possible values of the corresponding smoothness parameter $\nu$ of the Mat\'ern covariance function. Given the importance of $\nu$ when performing prediction, as shown by \citet{stein1999interpolation} and \citet{bk2022measure}, several methods for removing the restriction of $2\beta \in \mathbbm{N}$ have been proposed. \citet[][Author's response]{lindgren2011}  proposed to construct a GMRF approximation by approximating the spectrum of a Gaussian Mat\'ern field by a spectrum that is a reciprocal of a polynomial. This method is applicable for stationary models but it can not be applied to non-stationary models, and it has a fixed accuracy which may not be sufficient for certain applications. \citet{bolin2020numerical} proposed combining the FEM approximation of \citet{lindgren2011} with a quadrature approximation of the fractional operator to obtain a numerical method that works for any $\beta>d/4$ and can be made arbitrarily accurate. That work also provided a theoretical convergence analysis of the method, which was extended in \citet{bolin2018weak} and \citet{herrmann2020multilevel}.  \citet{rational_spde} later proposed a different type of approximation referred to as the rational SPDE approach, which has a lower computational cost. 

Even though the methods that work for non-stationary models with general smoothness are computationally efficient, they are much less used than the standard SPDE approach for statistical applications. The reason for this is that non-fractional SPDE models work in combination with the integrated nested Laplace approximation (INLA) method \citep{rue09} and are implemented in the widely used \texttt{R-INLA}  \citep{lindgren2015software}  \texttt{R} \citep{Rsoftware} package.
This software facilitates including SPDE-based models in general Bayesian latent Gaussian models, and the great majority of all applications of the SPDE approach have been done via this software. 

Unfortunately, the methods of \citet{bolin2020numerical} and \citet{rational_spde} provide approximations which are not compatible with \texttt{R-INLA}.
The reason is that the methods do not yield a Markov approximation, so the precision matrix obtained from the approximations are not sparse.  
The covariance matrix of the approximations are of the form $\mathbf{P}\mathbf{Q}^{-1}\mathbf{P}$, where both $\mathbf{P}$ and $\mathbf{Q}$ are sparse matrices that depend on the parameters of the model. 
To achieve a sparse precision matrix, which is necessary for \texttt{R-INLA}, \citet{rational_spde} showed that one can work with a latent model with sparse precision matrix $\mathbf{Q}$ if the projection matrix $\mathbf{A}$, which connects the locations of the mesh for the FEM approximation and the observation locations (see Section \ref{sec:overview} for details), is adjusted to $\widehat{\mathbf{A}} = \mathbf{A}\mathbf{P}$. This matrix, however, depends on the model parameters, which is not allowed in \texttt{R-INLA}.

The main goal of this work is to solve this problem by proposing an alternative rational approximation. 
The main idea is to approximate the covariance operator of the random field directly, instead of first approximating the solution $u$ and then deriving the corresponding covariance operator. 
This provides an approximation suitable for \texttt{R-INLA}, which is more numerically stable than that of the original rational SPDE approach. 
The proposed method is implemented in the R package \texttt{rSPDE} \citep{rspde}, which is available on CRAN and has an interface to \texttt{R-INLA}. 
Using the package, we show that the proposed method facilitates full Bayesian inference of all model parameters, including $\beta$, for latent Gaussian models based on fractional SPDEs.

The outline of the paper is as follows. 
In Section~\ref{sec:overview}, we give an overview of the model structure of the proposed approximation and show how it can be used for computationally efficient inference. The mathematical details and justifications of the method are provided in Sections~\ref{sec:prelim} to \ref{GMRF}.
Specifically, in Section~\ref{sec:prelim}, we introduce the generalized Whittle--Mat\'ern fields, which contain most of the previously proposed non-stationary SPDE-based Gaussian random fields as special cases, and for which our proposed method is applicable. 
In that section, we also provide the details of the FEM approximations. 
The new covariance-based rational approximation is introduced in Section~\ref{rationalapprox}, where we also prove that it provides an approximation of the covariance function of the generalized Whittle--Mat\'ern field with an explicit rate of convergence in the $L_2$-norm. 
In Section~\ref{GMRF}, we show that the covariance-based rational approximation can be represented as a GMRF, and illustrate how this can be used for statistical inference.  
Some of the details of the \texttt{rSPDE} implementation are discussed in Section~\ref{package}, and a comparison in terms of the accuracy of approximating covariance function by our method and some other methods is provided in Section~\ref{sec:numexp}. 
An application to modeling of precipitation data is presented in Section~\ref{application} and the article concludes with a discussion in Section~\ref{discussion}. Finally, the supplementary materials contain seven appendices which provide further technical details and proofs.

\section{Overview of the approximation strategy}\label{sec:overview}
As mentioned in the introduction, the main idea behind our strategy is to directly approximate the covariance operator of the random field. In this section we show the structure of the resulting approximation and also provide an illustration on how it can be used for inference in a simple application. More details will be given in later sections.
The covariance-based rational approximation of the Whittle--Mat\'ern field $u(\boldsymbol{s})$ defined in \eqref{eq:spde_intro}, whose covariance operator is $(\kappa^2 -\Delta)^{-2\beta}$, uses a combination of the finite element method and rational approximations in order to approximate $u(\boldsymbol{s})$ as $u_n(\boldsymbol{s})=\sum_{j=1}^{n} w_j \varphi_j(\boldsymbol{s})$, where $\{w_j\}_{j = 1}^{n}$ are stochastic weights and $\{\varphi_j\}_{j = 1}^{n}$ are FEM basis functions. We denote $\boldsymbol{w}=[w_1,...,w_{n}]^\top$. With our approximation, $\boldsymbol{w}$ can be expressed as a sum of $m+1$ independent GMRFs with sparse precision matrices:
\begin{equation}\label{stoc_weights_sum_representaion}
	\boldsymbol{w}=\sum_{i=1}^{m+1}\boldsymbol{x}_i,\quad\hbox{where}\quad
	\boldsymbol{x}_i \sim N(\boldsymbol{0},\boldsymbol{Q}_i^{-1}),\quad
	\boldsymbol{x}_i=\begin{pmatrix}x_{i1} &\cdots& x_{in}\end{pmatrix}^\top.
\end{equation}

Any linear predictor in \texttt{R-INLA} has this form, which means that we can perform inference in a computationally efficient manner based on the covariance-based rational approximation by using the same ideas as are used in  \texttt{R-INLA}.
For example, suppose that we observe $y_1,\ldots,y_N$, $N\in\mathbb{N}$, where
\begin{equation}\label{hiermodel}
	y_i=u(\boldsymbol{s}_i)+\epsilon_i,\quad i=1,...,N, 
\end{equation}
$\boldsymbol{s}_1,\ldots, \boldsymbol{s}_N \in\mathbb{R}^d$ are spatial locations, and $\pmb{\epsilon}=[\epsilon_1,...,\epsilon_N]^\top \sim N(\textbf{0},
\boldsymbol{Q}^{-1}_{\pmb{\epsilon}})$ for some sparse matrix $\boldsymbol{Q}_{\pmb{\epsilon}}$, such as 
 $\boldsymbol{Q}_{\pmb{\epsilon}} = \frac{1}{\sigma_{\epsilon}^2} \boldsymbol{I}_N$ if we have independent measurement noise. 
Defining $\boldsymbol{y}=[y_1,...,y_N]^\top$,
\eqref{hiermodel} can be written in matrix form as 
$\boldsymbol{y}=\textbf{A}\boldsymbol{w}+\pmb{\epsilon}$, 
where $\textbf{A}$ is the projector matrix with elements $A_{ij} = \varphi_j(\boldsymbol{s}_i)$.
Let $\boldsymbol{X}=[\boldsymbol{x}^\top_1,...,\boldsymbol{x}^\top_{m+1}]^\top$. Then, the precision matrix of $\boldsymbol{X}$ is the block diagonal matrix
\begin{equation}
	\label{Q}
	\boldsymbol{Q}=\text{diag}(\boldsymbol{Q}_1,\ldots, \boldsymbol{Q}_{m+1}).
\end{equation}
Writing the model in terms of the weights $\boldsymbol{X}$ allows us to equivalently write the model as $\boldsymbol{y}=\overline{\boldsymbol{A}}\boldsymbol{X}+\pmb{\epsilon}$ where $\overline{\boldsymbol{A}}$ is a block matrix of size $N\times n(m+1)$ obtained by combining $m+1$ copies of $\boldsymbol{A}$ as  $\overline{\boldsymbol{A}}=\begin{bmatrix}\boldsymbol{A}&\cdots&\boldsymbol{A}\end{bmatrix}$.
Thus, 
$\boldsymbol{y}\vert\boldsymbol{X} \sim N(\overline{\boldsymbol{A}}\boldsymbol{X},\boldsymbol{Q}^{-1}_{\pmb{\epsilon}})$
and
$\boldsymbol{X} \sim N(\textbf{0},\boldsymbol{Q}^{-1})$,
where $\boldsymbol{Q}$ is given in \eqref{Q}. Standard results for latent Gaussian models then give us that the posterior distribution of $\boldsymbol{X}$ is $\boldsymbol{X}\vert\boldsymbol{y} \sim N(\pmb{\mu}_{\boldsymbol{X}\vert\boldsymbol{y}},\boldsymbol{Q}^{-1}_{\boldsymbol{X}\vert\boldsymbol{y}})$,
where
\begin{equation}
	\label{pos_latentF}
	\pmb{\mu}_{\boldsymbol{X}\vert\boldsymbol{y}}=\boldsymbol{Q}^{-1}_{\boldsymbol{X}\vert\boldsymbol{y}}\overline{\boldsymbol{A}}^\top\boldsymbol{Q}_{\pmb{\epsilon}}\boldsymbol{y}\quad \hbox{and}
	\quad
	\boldsymbol{Q}_{\boldsymbol{X}\vert\boldsymbol{y}}=\overline{\boldsymbol{A}}^\top\boldsymbol{Q}_{\pmb{\epsilon}}\overline{\boldsymbol{A}}+\boldsymbol{Q}.
\end{equation}
Finally, we can obtain the marginal likelihood, $\ell(\boldsymbol{y})$, of $\boldsymbol{y}$ as
\begin{equation}
	\label{marginal_likelihood}
	\begin{aligned}
		2\ell(\boldsymbol{y}) = &\log{|\boldsymbol{Q}|}+\log{|\boldsymbol{Q}_{\pmb{\epsilon}}|}-\log{|\boldsymbol{Q}_{\boldsymbol{X}\vert\boldsymbol{y}}|}-\pmb{\mu}_{\boldsymbol{X}\vert\boldsymbol{y}}^\top\boldsymbol{Q}\pmb{\mu}_{\boldsymbol{X}\vert\boldsymbol{y}} \\
		& -(\boldsymbol{y}-\overline{\boldsymbol{A}}\pmb{\mu}_{\boldsymbol{X}\vert\boldsymbol{y}})^\top\boldsymbol{Q}_{\pmb{\epsilon}}(\boldsymbol{y}-\overline{\boldsymbol{A}}\pmb{\mu}_{\boldsymbol{X}\vert\boldsymbol{y}})-  n\log(2\pi).
	\end{aligned}
\end{equation}

The sparsity of $\boldsymbol{Q}$ is essential for computation. For instance, evaluating $\log|\boldsymbol{Q}|$ in the likelihood can be done efficiently based on sparse Cholesky decomposition \citep{rue2005gaussian}. Sparsity of $\boldsymbol{Q}$ also facilitates computationally efficient sampling of $\boldsymbol{w}$, and hence of $u$. See Appendix~\ref{sec:algorithm} for further details on the methods for sampling and likelihood evaluation.

\section{Whittle--Mat\'ern fields and FEM approximation}
\label{sec:prelim}
In this section we introduce the class of fractional-order SPDEs we are interested in as well as their FEM approximations. The model assumptions are presented in Section \ref{sec:continuous_model}, and in Section \ref{finite_element} we introduce the FEM approximations and study their convergence.

Let us begin by introducing some notation that will be needed later on. Given a bounded domain $\mathcal{D}\subset \mathbb{R}^d$, $d \in \{1,2,3\}$, we denote by $L_2(\mathcal{D})$ the
Lebesgue space of square-integrable real-valued functions endowed with the inner product ${(\phi,\psi)_{L_2(\mathcal{D})} = \int_\mathcal{D} \phi(\boldsymbol{x})\psi(\boldsymbol{x}) d\boldsymbol{x}}$. We denote the Sobolev space of order $k$ by $H^k(\mathcal{D})$:
$$H^k(\mathcal{D}) = \left\{w\in L_2(\mathcal{D}): D^\gamma w \in L_2(\mathcal{D}), \forall \gamma\in\mathbb{N}^d, |\gamma|\leq k\right\},$$
where we are using the multiindex notation for the differential operator $D^\gamma$, and $(\cdot,\cdot)_{H^k(\mathcal{D})}$ is the Sobolev inner product:
$$(u,v)_{H^k(\mathcal{D})} = \sum_{\gamma\in\mathbb{N}^d: |\gamma|\leq k} (D^\gamma u, D^\gamma v)_{L_2(\mathcal{D})}.$$ We denote by $H_0^1(\mathcal{D})$ the closure of $C_c^\infty(\mathcal{D})$ in $H^1(\mathcal{D})$, where $C_c^\infty(\mathcal{D})$ is the set of infinitely differentiable functions with compact support on $\mathcal{D}$. Additional notations needed for the theoretical analysis are given in Appendix~\ref{notation}. 

\subsection{Model assumptions}
\label{sec:continuous_model}
We are interested in the class of Gaussian random fields on $\mathcal{D}$ that can be represented as solutions to SPDEs of the form
\begin{equation}
	\label{general_main_equation_continuous}
	L^{\beta}(\tau u) = \mathcal{W} \quad  \text{in}\ \mathcal{D},
\end{equation}
where $L^\beta$ is a fractional power (in the spectral sense) of a second-order elliptic differential operator $L$ which determines the covariance structure of $u$, $\tau > 0 $ is a constant parameter, and $\mathcal{W}$ is Gaussian white noise on $L_2(\mathcal{D})$.
We have the following assumptions on $\mathcal{D}$:
\begin{assumption}\label{assumpDomain}
	The domain $\mathcal{D}$ is an open, bounded, convex polytope with closure $\overline{\mathcal{D}}$. 
\end{assumption}

Under Assumption \ref{assumpDomain}, we may define $H^2_{\mathcal{N}}(\mathcal{D}) = \{w\in H^2(\mathcal{D}): \partial w/\partial \nu = 0\hbox{ on }\partial\mathcal{D}\}$, where $\nu$ is the outward unit normal vector to $\partial\mathcal{D}$. Indeed, the expression $\partial w/\partial\nu = 0$ on $\partial\mathcal{D}$ makes sense since the trace of $Dw$ is well-defined in this case \citep[see, e.g.,][Theorem 4.6]{evansfineproperties}.
Let us now describe the assumptions on the differential operator $L$:
\begin{assumption}\label{assumpOperator}
	The operator $L$ is given in divergence form by ${Lu = -\nabla \cdot (\boldsymbol{H}\nabla u) + \kappa^2 u}$, and is equipped either with homogeneous Dirichlet or Neumann boundary conditions. 
	Furthermore, the function $\boldsymbol{H}: \overline{\mathcal{D}} \to \mathbbm{R}^{d \times d}$ is symmetric, Lipschitz continuous and uniformly positive definite, and $\kappa: \mathcal{D} \to \mathbbm{R}$ is an essentially bounded function, that is, 
	$$\esssup_{x\in\mathcal{D}} \kappa(x) = \inf\{a\in \mathbb{R}:\lambda(\{x:\kappa(x)>a\}) = 0\} <\infty.$$  
	Under Neumann boundary conditions, we additionally require that 
	$$\essinf_{x\in\mathcal{D}} \kappa(x) = \sup\{b\in \mathbb{R}:\lambda(\{x:\kappa(x)<b\}) = 0\} \geq \kappa_0>0,$$ 
	where $\lambda$ is the Lebesgue measure on $\mathcal{D}$.
\end{assumption}  

The SPDE \eqref{general_main_equation_continuous} under Assumptions \ref{assumpDomain} and \ref{assumpOperator} defines a class of models that have previously been considered by \citet{bolin2020numerical,cox_Kirchner,herrmann2020multilevel,rational_spde} and is referred to as generalized Whittle--Mat\'ern fields. It contains many previously proposed non-stationary SPDE-based spatial Gaussian random field models as special cases, such as those by \citet{lindgren2011,fuglstad2015does,fuglstad2019constructing,Hildeman2020}, and the method that we later introduce thus also applies to those models and their fractional extensions. 

In the case of Dirichlet boundary conditions, define the space $V = H^1_0(\mathcal{D}) \subset L_2(\mathcal{D})$, and in the case of Neumann boundary conditions let $V = H^1(\mathcal{D}) \subset L_2(\mathcal{D})$. Then, under Assumptions \ref{assumpDomain} and \ref{assumpOperator}, $L$ induces the following continuous and coercive bilinear form on $V$:
\begin{equation}
	\label{bilinear_form}
	a_L(v, u) = (\boldsymbol{H}\nabla u, \nabla v )_{L_2(\mathcal{D})} + (\kappa^2 u, v )_{L_2(\mathcal{D})}, \quad u,v \in V.
\end{equation}
\begin{remark}\label{rem:H2regul}
	Under Assumptions \ref{assumpDomain} and \ref{assumpOperator}, if $f\in L_2(\mathcal{D})$, then there exists a unique solution $u$ of $Lu = f$ and the operator $L$ is $H^2(\mathcal{D})$-regular, that is, $u\in H^2(\mathcal{D})\cap H^1_0(\mathcal{D})$ under Dirichlet boundary conditions, whereas under Neumann boundary conditions, we have $u\in H^2_\mathcal{N}(\mathcal{D})$. See, for instance, \cite[Theorem 3.2.1.2]{grisvard} for Dirichlet boundary conditions or \cite[Theorem 3.2.1.3]{grisvard} for Neumann boundary conditions.
\end{remark}

By remark \ref{rem:H2regul}, specifically by the existence and uniqueness of the solution to the equation $Lu = f$, we can define the inverse operator $L^{-1}: L_2(\mathcal{D})\to L_2(\mathcal{D})$. 
By Rellich-Kondrachov theorem \cite[Theorem 4.11]{evansfineproperties}, $L^{-1}$ is a compact operator, and observe that $L^{-1}$ is self-adjoint, see Appendix~\ref{notation} for a justification. 
Hence, by the spectral theorem for self-adjoint and compact operators, there exists an orthonormal basis $\{e_j\}_{j\in\mathbb{N}}$ in $L_2(\mathcal{D})$ formed by eigenvectors of $L$ whose eigenvalues $\{\lambda_j\}_{j\in\mathbb{N}}$ are non-negative and can be arranged in a non-decreasing order.

\begin{remark}\label{rem:WeylsLaw}
	Under Assumptions \ref{assumpDomain} and \ref{assumpOperator}, 
	the operator $L$ satisfies the Weyl's law,
	that is, there exist $c,C>0$ such that for every $j\in\mathbb{N}$,
	$c j^{2/d} \leq \lambda_j \leq C j^{2/d}.$
	See \citet[Theorem 6.3.1]{davies1996spectral} for the Dirichlet case. 
	For the Neumann case, the Weyl's law holds for the case
	in which $\boldsymbol{H}$ is a constant diagonal matrix(\cite{fedosov1,fedosov2}), in particular, it holds for the Neumann Laplacian. 
	The result for a general $\boldsymbol{H}$ satisfying Assumption
	\ref{assumpOperator} is a direct consequence of the
	Weyl's law for the Neumann Laplacian 
	together with Proposition~\ref{prp:hdotneumann} in Appendix~\ref{prooffemconvergence} and
	the min-max principle. 
\end{remark}

Our goal is to obtain approximations of the covariance operator $L^{-2\beta}$ of the Gaussian random field $u$ which solves equation \eqref{general_main_equation_continuous}. Let
$$\varrho^\beta(x,y) = \sum_{j=1}^\infty \lambda_j^{-2\beta} e_j(x)e_j(y).$$
Then, one can readily check, by \citet[][Theorem 3.10]{kernel_mercer}, that the covariance operator $L^{-2\beta}$ is a kernel operator, with kernel $\varrho^\beta(\cdot,\cdot)$. 
That is, for any ${f\in L_2(\mathcal{D})}$,  we have $(L^{-2\beta}f)(x) = \int_{\mathcal{D}} \varrho^\beta(x,y) f(y)\text{d}y$ for a.e. $x \in \mathcal{D}$.
It is well-known that there exists a centered square-integrable Gaussian random field $u$ that solves 
\eqref{general_main_equation_continuous} if, and only if, its covariance operator, $L^{-2\beta}$, 
has finite trace \citep[][Theorem 3.2.5]{SPDEbrownucla}. Under Assumptions \ref{assumpDomain} and \ref{assumpOperator},
one can use Weyl's law (Remark \ref{rem:WeylsLaw})
to show that $L^{-2\beta}$ has finite trace if, and only if, $\beta > d/4$. Hence, if $\beta>d/4$, then $u$ is a centered square-integrable Gaussian random field with covariance function
$\varrho^\beta(x,y) = E[u(x)u(y)],$
where the equality holds for a.e. $(x,y)\in \mathcal{D}\times\mathcal{D}.$

\subsection{Finite element approximation}
\label{finite_element}

The goal is now to provide a convergence analysis for FEM approximations of the covariance operator $L^{-2\beta}$. Let us start by describing the setup we will use.

\begin{assumption}\label{assumpFE}
	Let $V_h \subset V$ be a finite element space that is spanned by a set of continuous piecewise linear basis functions $\{\varphi_j\}_{j=1}^{n_h}$ (see Appendix~\ref{sec:FEM_basis_func}), with $n_h\in\mathbb{N}$, defined with respect to a triangulation $\mathcal{T}_h$ of $\overline{\mathcal{D}}$ indexed by the mesh width $h:=\max_{T\in\mathcal{T}_h} h_T$, where ${h_T:= diam(T)}$ is the diameter of the element ${T\in \mathcal{T}_h}$. We  assume that the family $(\mathcal{T}_h)_{h\in(0,1)}$ of triangulations inducing the finite-dimensional subspaces $(V_h)_{h\in(0,1)}$ of $V$ is quasi-uniform, that is, there exist constants $K_1,K_2>0$ such that $\rho_T \geq K_1 h_T$ and $h_T\geq K_2 h$ for all $T\in \mathcal{T}_h$ and $h\in(0,1)$. Here, $\rho_T>0$ is the radius of the largest ball inscribed in $T\in\mathcal{T}_h$.
\end{assumption}

We are now in a position to describe the FEM discretization of the model \eqref{general_main_equation_continuous}. Let $L_h: V_h\to V_h$ be defined in terms of the bilinear form $a_L$ as its restriction to $V_h\times V_h$:
$$(L_h \phi_h, \psi_h)_{L_2(\mathcal{D})} = a_L(\phi_h, \psi_h) , \quad \phi_h, \psi_h \in V_h.$$ 
Note that $L_h$ is a positive-definite, symmetric, linear operator on the finite-dimensional space $V_h$. Hence, we may arrange the eigenvalues of $L_h$ as
${0 < \lambda_{1,h} \leq \lambda_{2,h} \leq \cdots\leq \lambda_{n_h, h},}$
with corresponding eigenvectors $\{e_{j,h}\}_{j=1}^{n_h}$ which are orthonormal in $L_2(\mathcal{D})$. Let $\mathcal{W}_h$ denote Gaussian white noise on $V_h$. That is, there exist independent standard Gaussian random variables $\xi_1,\ldots, \xi_{n_h}$ such that $\mathcal{W}_h = \sum_{j=1}^{n_h} \xi_j e_{j,h}$. Then, we refer to the following SPDE on $V_h$ as the discrete model of \eqref{general_main_equation_continuous}:

\begin{equation}\label{galerkinspde}
	L_h^{\beta} u_h = \mathcal{W}_h.
\end{equation}
Let $u_h$ be a solution of \eqref{galerkinspde}, then the covariance operator of $u_h$ is given by $L_h^{-2\beta}$, and
$$
\varrho_h^\beta(x,y) = \sum_{j=1}^{n_h} \lambda_{j,h}^{-2\beta} e_{j,h}(x) e_{j,h}(y),\quad\hbox{for a.e. $(x,y)\in\mathcal{D}\times \mathcal{D}$},
$$ 
is the corresponding covariance function. We have the following result regarding the convergence of the FEM approximation $ \varrho_h^\beta$ to the exact covariance function $\varrho^\beta$ in the $L_2(\mathcal{D}\times\mathcal{D})$-norm defined by $\|f\|_{L_2(\mathcal{D}\times\mathcal{D})}^2 = \int_{\mathcal{D}}\int_{\mathcal{D}} f(x,y)^2dxdy$. The proof is given in Appendix~\ref{prooffemconvergence}.

\begin{proposition}\label{cov_fem_approx_rate}
	Under Assumptions \ref{assumpDomain}, \ref{assumpOperator} and \ref{assumpFE}, for each $\beta>d/4$  and each $\varepsilon>0$, we have
	\begin{equation}\label{eq:cov_bound}
	\|\varrho^\beta - \varrho_h^\beta\|_{L_2(\mathcal{D}\times\mathcal{D})} \lesssim_{\varepsilon,\beta, \boldsymbol{H}, \kappa,\mathcal{D}} h^{\min\{4\beta-d/2 -\varepsilon,2\}}.
	\end{equation}
\end{proposition}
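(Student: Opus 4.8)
My plan is to recast the claimed $L_2(\mathcal{D}\times\mathcal{D})$ bound as a Hilbert--Schmidt operator estimate. Since the $L_2(\mathcal{D}\times\mathcal{D})$-norm of a kernel equals the Hilbert--Schmidt norm $\|\cdot\|_{HS}$ of the associated integral operator, and since $\varrho_h^\beta$ is the kernel of the operator $\widehat{L}_h^{-2\beta} := L_h^{-2\beta}P_h$, where $P_h$ denotes the $L_2(\mathcal{D})$-orthogonal projection onto $V_h$, the statement is equivalent to $\|L^{-2\beta} - \widehat{L}_h^{-2\beta}\|_{HS} \lesssim h^{\min\{4\beta - d/2 - \varepsilon, 2\}}$. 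Using that $\widehat{L}_h^{-2\beta} = (\widehat{L}_h^{-\beta})^2$ and $L^{-2\beta} = (L^{-\beta})^2$, I would split the error through the telescoping identity
\begin{equation*}
L^{-2\beta} - \widehat{L}_h^{-2\beta} = L^{-\beta}\bigl(L^{-\beta} - \widehat{L}_h^{-\beta}\bigr) + \bigl(L^{-\beta} - \widehat{L}_h^{-\beta}\bigr)\widehat{L}_h^{-\beta},
\end{equation*}
so that it suffices to bound each product in $\|\cdot\|_{HS}$.

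The key idea for obtaining the sharp exponent $4\beta - d/2$, rather than the weaker $2\beta - d/2$ that a crude bound would give, is to insert an intermediate fractional power that balances a trace-class requirement against the finite-element approximation order. Writing $F := L^{-\beta} - \widehat{L}_h^{-\beta}$ and fixing $0 \le \gamma < \beta - d/4$, I would estimate the first term by
\begin{equation*}
\|L^{-\beta}F\|_{HS} = \|L^{-(\beta-\gamma)}\,L^{-\gamma}F\|_{HS} \le \|L^{-(\beta-\gamma)}\|_{HS}\,\|L^{-\gamma}F\|_{\mathcal{L}(L_2(\mathcal{D}))}.
\end{equation*}
The factor $\|L^{-(\beta-\gamma)}\|_{HS}^2 = \operatorname{Tr}(L^{-2(\beta-\gamma)})$ is finite precisely when $\beta - \gamma > d/4$, which holds by Weyl's law (Remark~\ref{rem:WeylsLaw}) under the constraint on $\gamma$. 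The operator-norm factor is a smoothed finite-element error for the fractional operator, which I expect to satisfy $\|L^{-\gamma}F\|_{\mathcal{L}(L_2(\mathcal{D}))} \lesssim h^{\min\{2(\beta+\gamma), 2\}}$. Letting $\gamma \uparrow \beta - d/4$, i.e. choosing $\gamma = \beta - d/4 - \varepsilon/2$, then produces the exponent $\min\{4\beta - d/2 - \varepsilon, 2\}$. The second term is handled identically after taking adjoints, replacing $L$ by $L_h$; here one needs the uniform-in-$h$ bound $\|\widehat{L}_h^{-(\beta-\gamma)}\|_{HS} \le \|L^{-(\beta-\gamma)}\|_{HS}$, which follows from the min--max principle giving $\lambda_{j,h} \ge \lambda_j$ and hence $\operatorname{Tr}(L_h^{-2(\beta-\gamma)}) \le \operatorname{Tr}(L^{-2(\beta-\gamma)})$.

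The main obstacle is thus the smoothed operator-norm estimate $\|L^{-\gamma}(L^{-\beta} - \widehat{L}_h^{-\beta})\|_{\mathcal{L}(L_2(\mathcal{D}))} \lesssim h^{\min\{2(\beta+\gamma),2\}}$ and its discrete counterpart, which is where the fractional power genuinely enters and cannot be read off from eigenvalue asymptotics alone because the continuous and discrete eigenvectors differ. I would establish it via a Dunford--Taylor (Balakrishnan) contour representation of $L^{-\beta}$ and $L_h^{-\beta}$, reducing the difference of fractional powers to a contour integral of the resolvent difference $(z-L)^{-1} - (z-\widehat{L}_h)^{-1}$ weighted by $z^{-\beta}$. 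Bounding the resolvent difference then rests on the standard second-order finite-element error estimate for the non-fractional problem $Lu = f$, which is available here because $L$ is $H^2(\mathcal{D})$-regular (Remark~\ref{rem:H2regul}) and $V_h$ consists of piecewise linears on a quasi-uniform mesh (Assumption~\ref{assumpFE}); uniform resolvent bounds for $L_h$ along the contour, together with the coercivity of $a_L$, control the integral and yield the stated rate. Carefully tracking the smoothing exponent $\gamma$ through the contour integral, and verifying that the constants depend only on the quantities listed in the statement, is the delicate part of this step.
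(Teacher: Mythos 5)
Your proposal is correct in outline and follows the same core strategy as the paper's proof: identify $\|\varrho^\beta-\varrho^\beta_h\|_{L_2(\mathcal{D}\times\mathcal{D})}$ with the Hilbert--Schmidt norm of $L^{-2\beta}-L_h^{-2\beta}\Pi_h$, split that difference into two products, bound one factor of each product in Hilbert--Schmidt norm via Weyl's law and the min--max inequality $\lambda_{j,h}\ge\lambda_j$, bound the other factor by a smoothed deterministic FEM error estimate for the fractional power, and push the smoothing exponent to the critical threshold to obtain the rate $\min\{4\beta-d/2-\varepsilon,2\}$. The differences are in execution. First, the paper splits asymmetrically, $L^{-2\beta}=L^{-(2\beta-d/4-\delta)}L^{-(d/4+\delta)}$, putting just enough power on one factor to make it Hilbert--Schmidt; this lets the first term be handled with the unsmoothed operator-norm estimate ($\gamma=0$ in Lemma~\ref{boundnorm}) and confines the smoothing to the second term, where it is measured in the $\dot H_L^\gamma(\mathcal{D})$ scale. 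Your symmetric split $L^{-\beta}\cdot L^{-\beta}$ is equally valid but requires smoothing in \emph{both} terms, and in the second term the smoothing factor is the discrete operator $L_h^{-\gamma}\Pi_h$ rather than $L^{-\gamma}$; the estimate $\|L_h^{-\gamma}\Pi_h(L^{-\beta}-L_h^{-\beta}\Pi_h)\|_{\mathcal{L}(L_2(\mathcal{D}))}\lesssim h^{\min\{2(\beta+\gamma),2\}}$ is not literally the statement of Lemma~\ref{boundnorm} and needs a separate (standard but nontrivial) argument, e.g.\ uniform-in-$h$ boundedness of $L_h^{\gamma}\Pi_h L^{-\gamma}$. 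Second, the paper obtains the key smoothed error estimate by importing \citet[Theorem~1]{cox_Kirchner}, whereas you propose to prove it from scratch via a Balakrishnan/Dunford--Taylor contour representation and resolvent error estimates; that route is legitimate and is essentially how such results are proved in the literature, but note that to interpolate the resolvent estimates to non-integer smoothing exponents you will still need the identification of $\dot H_L^\sigma(\mathcal{D})$ with Sobolev interpolation spaces. For Neumann boundary conditions that identification is precisely the new technical content of the paper (Proposition~\ref{prp:hdotneumann}), and your sketch does not address it; the $H^2(\mathcal{D})$-regularity of Remark~\ref{rem:H2regul} alone covers only the integer endpoints of the interpolation.
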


Here, and in the remainder of the paper, the notation 
$A \lesssim_{\theta_1,\ldots,\theta_k} B$, where
$k\in\mathbb{N}$, means that there exists a constant $C$ depending on
$\theta_1,\ldots,\theta_k$ ($\theta_i, i=1,\ldots,k$, can be a parameter, a function, a domain,
etc.) such that
$A\leq C B.$

\begin{remark}
	\citet[][Theorem 1]{cox_Kirchner} proved the bound \eqref{eq:cov_bound} in 
	the case of homogeneous Dirichlet boundary conditions. They did not 
	provide a bound for the case of homogeneous Neumann boundary conditions. 
	Proposition~\ref{cov_fem_approx_rate} arrives at the same bound for the
	Neumann case.
	For this, we additionally require that $\essinf_{x\in\mathcal{D}} \kappa(x) \geq \kappa_0>0$
	and that the domain $\mathcal{D}$ is a convex polytope in the Neumann case. 
	As far as we know, this is a 
	new result. The key step in the proof is to obtain an analogous result 
	to \citet[][Lemma~2]{cox_Kirchner}, which is given by Proposition~\ref{prp:hdotneumann} 
	in Appendix~\ref{prooffemconvergence}. 
\end{remark}

\section{Rational approximation}
\label{rationalapprox}
Having introduced the FEM approximation, we are now ready to define the 
complete approximation of the covariance operator of the generalized 
Whittle--Mat\'ern fields. The approximation is obtained by combining a 
rational approximation of the fractional power of the covariance operator 
with the FEM approximation. We begin by introducing the method and then 
provide a theoretical justification by showing an explicit rate of convergence of the approximate covariance function to the correct one in the $L_2(\mathcal{D}\times\mathcal{D})$-norm.

In \cite{rational_spde}, the authors obtained an approximation of the solution 
to \eqref{general_main_equation_continuous}, which also implicitly defines an 
approximation of the corresponding covariance operator. However, as we have 
previously mentioned, this results in an approximation that is not 
implementable in \texttt{R-INLA}. Also, for statistical applications there is usually no 
need to have an approximation of the solution itself, since only the 
corresponding distribution matters for inference. With this in mind, we 
propose to directly approximate the covariance operator $L^{-2\beta}$. To 
this end,
we first split $L_h^{-2\beta} = L_h^{-\{2\beta\}} L_h^{-\lfloor2\beta\rfloor}$, where $\{x\}=x-\lfloor x\rfloor$ is the fractional part of $x$. Then, we approximate $L_h^{-\{2\beta\}}$ with a rational approximation. This yields an approximation
\begin{equation} \label{cov_operator_appro2}
	L_h^{-2\beta} \approx	L_{h,m}^{-2\beta} := L_h^{-\lfloor 2\beta\rfloor} p(L_h^{-1})q(L_h^{-1})^{-1}.
\end{equation}
Here, $p(L_h^{-1}) = \sum_{i=0}^{m} a_i L_h^{m-i}$ and $q(L_h^{-1}) = \sum_{j=0}^m b_j L_h^{m-j}$ are polynomials obtained from a rational approximation of order $m$ of the real-valued function $f(x) = x^{\{2\beta\}}$. That is,
$$
x^{\{2\beta\}} \approx \frac{\sum_{i=0}^{m} a_i x^{i}}{\sum_{i=0}^m b_i x^{i}}.
$$
Specifically, to obtain $\{a_i\}_{i = 0}^{m}$ and  $\{b_i\}_{i = 0}^{m}$, we approximate the function $f(x) = x^{\{2\beta\}}$ on the interval $[\lambda_{n_h, h}^{-1},\lambda_{1,h}^{-1}]$, which covers the spectrum of $L^{-1}_h$. The coefficients are computed as the best rational approximation in the $L_\infty$-norm, which, for example, can be obtained via the second Remez algorithm \citep{remez1934determination} or by the recent, and more stable, BRASIL algorithm \citep{hofreither2021algorithm}. See Appendix~\ref{sec:algo_idea} for details about this algorithm and a justification for the choice of using the best rational approximation in $L_{\infty}$-norm. 

By defining the covariance function
$$
\varrho_{h,m}^\beta(x,y) = \sum_{j=1}^{n_h} \lambda_{j,h}^{-\lfloor 2\beta \rfloor}p(\lambda_{j,h}^{-1})q(\lambda_{j,h}^{-1})^{-1} e_{j,h}(x) e_{j,h}(y),\quad\hbox{for a.e. $(x,y)\in\mathcal{D}$},
$$
we have that $\varrho_{h,m}^\beta$ is the kernel of the covariance operator $L_{h,m}^{-2\beta}$. There are two sources of errors when we consider $\varrho_{h,m}^\beta$ as an approximation of the true covariance function $\varrho^\beta$ of the generalized Whittle--Mat\'ern field: the FEM approximation and the rational approximation. The following proposition, whose proof is given in Appendix~\ref{prooffemconvergence}, shows that we have control of these two sources of errors via the FEM mesh width $h$ and the order of the rational approximation $m$.

\begin{proposition}\label{ra_bound}
	Let $\beta>d/4$. Under Assumptions \ref{assumpDomain}, \ref{assumpOperator} and \ref{assumpFE}, for every $\varepsilon>0$ and for sufficiently small $h$, we have:
	\begin{equation} \label{total_error_bound}
		\|\varrho_{h,m}^\beta - \varrho^\beta\|_{L_2(\mathcal{D}\times\mathcal{D})} \lesssim_{\varepsilon,\beta, \boldsymbol{H}, \kappa,\mathcal{D}} h^{\min\{4\beta-d/2 -\varepsilon,2\}} + \mathbbm{1}_{2\beta \notin \mathbb{N}} h^{-d/2} e^{-2\pi\sqrt{\{2\beta\} m}}.
	\end{equation}
\end{proposition}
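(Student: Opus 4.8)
The plan is to combine the already-established finite-element bound with a separate estimate for the purely rational part of the error, linked through the triangle inequality. First I would write
$$\|\varrho_{h,m}^\beta - \varrho^\beta\|_{L_2(\mathcal{D}\times\mathcal{D})} \le \|\varrho_{h,m}^\beta - \varrho_h^\beta\|_{L_2(\mathcal{D}\times\mathcal{D})} + \|\varrho_h^\beta - \varrho^\beta\|_{L_2(\mathcal{D}\times\mathcal{D})}.$$
The second summand is exactly the quantity controlled by Proposition~\ref{cov_fem_approx_rate}, which contributes the term $h^{\min\{4\beta-d/2-\varepsilon,2\}}$. It therefore remains to bound the first summand, the error committed by the rational approximation on the fixed discrete space $V_h$.

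For this term I would exploit that $\varrho_h^\beta$ and $\varrho_{h,m}^\beta$ are kernels of operators that are simultaneously diagonalized by the orthonormal (in $L_2(\mathcal{D})$) eigenbasis $\{e_{j,h}\}_{j=1}^{n_h}$. Since the tensor products $\{e_{j,h}\otimes e_{k,h}\}$ are orthonormal in $L_2(\mathcal{D}\times\mathcal{D})$ and both kernels are diagonal in this system, Parseval's identity (equivalently, the fact that the $L_2(\mathcal{D}\times\mathcal{D})$-norm of a kernel equals the Hilbert--Schmidt norm of its operator) gives
$$\|\varrho_h^\beta - \varrho_{h,m}^\beta\|_{L_2(\mathcal{D}\times\mathcal{D})}^2 = \sum_{j=1}^{n_h}\left(\lambda_{j,h}^{-2\beta} - \lambda_{j,h}^{-\lfloor 2\beta\rfloor}\,r(\lambda_{j,h}^{-1})\right)^2 = \sum_{j=1}^{n_h}\lambda_{j,h}^{-2\lfloor 2\beta\rfloor}\left(f(\lambda_{j,h}^{-1}) - r(\lambda_{j,h}^{-1})\right)^2,$$
where $r=p/q$ is the rational approximant and $f(x)=x^{\{2\beta\}}$, using $\lambda_{j,h}^{-\{2\beta\}} = f(\lambda_{j,h}^{-1})$. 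This reduces the estimate to two separate tasks: bounding the uniform approximation error of $f$ by $r$ over the spectrum, and bounding the eigenvalue sum $\sum_{j}\lambda_{j,h}^{-2\lfloor 2\beta\rfloor}$.

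For the approximation error, each $\lambda_{j,h}^{-1}$ lies in $[\lambda_{n_h,h}^{-1},\lambda_{1,h}^{-1}]$, on which $r$ is by construction the best $L_\infty$ rational approximant of $f$; hence $|f(\lambda_{j,h}^{-1})-r(\lambda_{j,h}^{-1})| \le E_m := \sup_{x\in[\lambda_{n_h,h}^{-1},\lambda_{1,h}^{-1}]}|f(x)-r(x)|$. Since $\lambda_{1,h}\ge\lambda_1$ by the min-max principle, this interval is contained in the fixed interval $[0,\lambda_1^{-1}]$ for every $h$, and monotonicity of the best-approximation error under interval inclusion lets me replace $E_m$ by the best error on $[0,\lambda_1^{-1}]$. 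Invoking the classical subgeometric convergence rate for best uniform rational approximation of the fractional power $x^{\{2\beta\}}$ (Stahl's theorem), together with the scaling $x\mapsto \lambda_1^{-1}x$, then yields $E_m \lesssim_{\beta,\kappa,\boldsymbol{H},\mathcal{D}} e^{-2\pi\sqrt{\{2\beta\}m}}$; when $2\beta\in\mathbb{N}$ we have $\{2\beta\}=0$, so $f\equiv 1$ is represented exactly and $E_m=0$, which accounts for the indicator $\mathbbm{1}_{2\beta\notin\mathbb{N}}$. For the eigenvalue sum I would use the Ritz bound $\lambda_{j,h}\ge\lambda_j$ together with the Weyl lower bound $\lambda_j\gtrsim j^{2/d}$ from Remark~\ref{rem:WeylsLaw}, giving $\sum_{j=1}^{n_h}\lambda_{j,h}^{-2\lfloor 2\beta\rfloor}\lesssim \sum_{j=1}^{n_h} j^{-4\lfloor 2\beta\rfloor/d}$. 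Because $d\le 3$, this series is summable whenever $\lfloor 2\beta\rfloor\ge 1$, while for $\lfloor 2\beta\rfloor=0$ it equals $n_h\lesssim h^{-d}$; in either case $\sum_{j=1}^{n_h}\lambda_{j,h}^{-2\lfloor 2\beta\rfloor}\lesssim h^{-d}$ for $h\in(0,1)$, so its square root is $\lesssim h^{-d/2}$. Multiplying the two estimates and adding the finite-element term via the triangle inequality gives exactly \eqref{total_error_bound}.

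The routine parts are the Parseval reduction and the eigenvalue-sum estimate. The main obstacle, and the only genuinely nontrivial input, is the sharp exponential rate $e^{-2\pi\sqrt{\{2\beta\}m}}$ for best rational approximation of $x^{\{2\beta\}}$ near its singularity at the origin: one must both cite the correct deep approximation-theoretic result and verify that the rate survives uniformly in $h$, which is precisely why the containment $[\lambda_{n_h,h}^{-1},\lambda_{1,h}^{-1}]\subset[0,\lambda_1^{-1}]$ (rather than any faster rate available on an interval bounded away from $0$) is the right device. The shrinking lower endpoint $\lambda_{n_h,h}^{-1}\to 0$ is what prevents a faster geometric rate and forces the subgeometric bound, and it is the interplay between this and the $h^{-d/2}$ blow-up of the eigenvalue sum that produces the stated form of the rational error term.
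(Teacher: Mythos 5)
Your proposal is correct and follows essentially the same route as the paper's proof: a triangle inequality splitting off the FEM error (Proposition~\ref{cov_fem_approx_rate}), a Parseval/Hilbert--Schmidt reduction of the rational error to the discrete eigenvalues, Stahl's theorem on a fixed interval containing all the $J_h=[\lambda_{n_h,h}^{-1},\lambda_{1,h}^{-1}]$, and $n_h\lesssim h^{-d}$ to produce the $h^{-d/2}$ factor. The only cosmetic difference is that you keep the weight $\lambda_{j,h}^{-2\lfloor 2\beta\rfloor}$ inside the eigenvalue sum (which is in fact slightly sharper when $\lfloor 2\beta\rfloor\geq 1$), whereas the paper absorbs $x^{\lfloor 2\beta\rfloor}\leq 1$ into the sup bound and counts $n_h$ terms; both yield the stated bound.
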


\begin{remark}
	We can calibrate the accuracy of the rational approximation with the finite element error by
	choosing $m\in\mathbb{N}$ such that $m = \lceil(\min\{4\beta-d/2 - \varepsilon,2\} + d/2)^2 \frac{(\log{h})^2}{4\pi^2\{2\beta\}}\rceil.$    
	This ensures that the rate of convergence in \eqref{total_error_bound} is $\min\{4\beta-d/2-\varepsilon,2\}$. 
	See Section~\ref{sec:numexp} for further details on the choice of $m$.
\end{remark}

\section{GMRF representation}
\label{GMRF}
The goal of this section is to obtain a sparse matrix representation of the precision operator of the  rational approximation from the previous section, so that the methods in Section~\ref{sec:overview} can be used for computationally efficient sampling and likelihood evaluation.

The solution $u_h$ in \eqref{galerkinspde} at spatial location $\boldsymbol{s}$ can be represented as $u_h(\boldsymbol{s})=\sum_{j=1}^{n_h} w_j \varphi_j(\boldsymbol{s})$, where $\{w_j\}_{j = 1}^{n_h}$ are stochastic weights and $\{\varphi_j\}_{j = 1}^{n_h}$ are the piecewise linear finite element basis functions. 
We will now show how to represent $\boldsymbol{w}=[w_1,...,w_{n_h}]^\top$ as a sum of independent GMRFs, each with a sparse precision matrix. 
The key step is to apply a partial fraction decomposition in \eqref{cov_operator_appro2}:
\begin{equation}\label{cov_operator_sum_form2}
	L_{h,m}^{-2\beta} =L_h^{-\lfloor 2\beta \rfloor} \left(\sum_{i=1}^{m}  r_i  
	(L_h-p_i I_{V_h})^{-1} +k I_{V_h} \right).
\end{equation}
Here, $\{r_i\}_{i=1}^{m}$, $\{p_i\}_{i=1}^{m}$ and $k$ are real numbers, 
and $I_{V_h}$ is the identity operator mapping the finite element space to 
itself. 
Let $\boldsymbol{C}$ be the mass matrix with elements
$\boldsymbol{C}_{i,j} = (\varphi_i,\varphi_j)_{L_2(\mathcal{D})}$, 
and let 
$\boldsymbol{L}$ be the matrix obtained by the bilinear form 
$a_L(\cdot,\cdot)$ induced by the differential operator $L$, which has 
elements ${\boldsymbol{L}_{i,j} = (\boldsymbol{H}\nabla \varphi_i, \nabla \varphi_j )_{L_2(\mathcal{D})} + (\kappa^2 \varphi_i, \varphi_j )_{L_2(\mathcal{D})}}$. 
Then, we can use \eqref{cov_operator_sum_form2} to obtain the covariance matrix of $\boldsymbol{w}$ as (see Appendix~\ref{covmatrix} for a derivation):
\begin{equation}
	\label{cov_matrix_rational2}
	\mathbf{\Sigma}^R_{\boldsymbol{u}} = (\boldsymbol{L}^{-1}
	\boldsymbol{C})^{\lfloor 2\beta \rfloor} \sum_{i=1}^{m}
	r_i(\boldsymbol{L}-p_i\boldsymbol{C})^{-1}+\boldsymbol{K}_{\lfloor 2\beta \rfloor},
\end{equation}
where $\boldsymbol{K}_0 = k\boldsymbol{C}$ and $\boldsymbol{K}_{n} = k(\boldsymbol{L}^{-1}\boldsymbol{C})^{n-1}  \boldsymbol{L}^{-1}$
when $n \geq 1,n\in\mathbb{N} $.
In the Mat\'ern case, that is, when $\kappa$ is a constant 
and $\boldsymbol{H}$ is an identity matrix, we simply have 
${\boldsymbol{L} =  \boldsymbol{G} + \kappa^2 \boldsymbol{C}}$, where
$\boldsymbol{G}$ is the stiffness matrix with elements $ \boldsymbol{G}_{i,j} = (\nabla\varphi_i,\nabla\varphi_j)_{L_2(\mathcal{D})}$.

Since we have the same degree for numerator and denominator in the rational approximation, we can use the BRASIL algorithm  \citep{hofreither2021algorithm} to compute the coefficients $\{a_i\}_{i = 0}^{m}$ and $\{b_i\}_{i = 0}^{m}$ in \eqref{cov_operator_appro2} and thus the coefficients $\{r_i\}_{i = 0}^{m}$, $\{p_i\}_{i = 0}^{m}$ and $k$ in \eqref{cov_operator_sum_form2}. Another option, commonly used in practice, is to use a ``near best'' rational approximation. One such option, which was used in \citet{rational_spde}, and which is also implemented in the \texttt{rSPDE} package, is the Clenshaw–Lord Chebyshev–Pad\'e algorithm \citep{baker1996pade}. See Appendix~\ref{sec:algo_idea} for details about this algorithm.
Also, observe that the interval $[\lambda_{n_h, h}^{-1},\lambda_{1,h}^{-1}]$ where one should compute the rational approximation may vary with the parameters $\kappa$ and $\boldsymbol{H}$, and that recomputing the coefficients $\{a_i\}_{i = 0}^{m}$ and  $\{b_i\}_{i = 0}^{m}$ for different values of these parameters is not practical for implementations. To avoid this, recall from Assumption \ref{assumpOperator} that  $\kappa_0^2$ is a lower bound for the eigenvalues of $L$ in the case of Neumann boundary conditions and that $\lambda_1 \leq \lambda_{1,h}$ (see Proposition~\ref{eigenval} in Appendix~\ref{prooffemconvergence}). We can, then, re-scale the operator $L_h$ as $L_h/\kappa_0^2$ so that we can replace the interval $[\lambda_{n_h, h}^{-1},\lambda_{1,h}^{-1}]$ by $[\delta, 1]$, where, ideally, $\delta$ is chosen in such way that $\delta \leq \kappa_0^2/\lambda_{n_h, h}$ for all considered mesh sizes $h$. In the \texttt{rSPDE} package, the choices $\delta = 0$ and  $\delta = 10^{-(5+m)/2}$ are implemented. However, the difference in accuracy with respect to approximating the covariance function is negligible between these two choices.  

For these  options, we verified empirically that if $f_\beta(x) = x^{\{2\beta\}}$, $\{2\beta\} = 2\beta - \lfloor 2\beta \rfloor$, and $\widehat{f}_{\beta,m}$ is the rational approximation of $f_\beta$ where the numerator and denominator have same degree $m$, then
$\widehat{f}_{\beta,m} = x^{\lfloor 2\beta \rfloor} \sum_{j=1}^m r_i (x - p_i)^{-1} + k,$
where $\{p_i\}_{i=1}^{m}$ are negative real numbers and $\{r_i\}_{i=1}^{m}$ and $k$ are positive real numbers. 
This, together with the fact that the BRASIL algorithm is only implemented for rational approximations with numerator and denominator having the same degree, are the main reasons we chose to consider the numerator and denominator having the same degree $m$. \citet{rational_spde} instead considered a rational approximation where the numerator has degree $m$ and the denominator has degree $m+1$. However, with this choice the partial fractions would not yield a decomposition into positive-definite operators in our case.

Since $\{p_i\}_{i=1}^{m}$ are negative real numbers and $\{r_i\}_{i=1}^{m}$ and $k$ are positive real numbers, we have that $r_i (\boldsymbol{L}^{-1}\boldsymbol{C})^{\lfloor 2\beta \rfloor}(\boldsymbol{L}-p_i\boldsymbol{C})^{-1}$, for $i=1,...m$, and $\boldsymbol{K}_{\lfloor 2\beta \rfloor}$ are valid covariance matrices. 
Thus, $\boldsymbol{w}$ can be expressed as a sum of $m+1$ independent random vectors $\boldsymbol{x}_i$ as in 
 \eqref{stoc_weights_sum_representaion}, where $\boldsymbol{Q}_i$ is the precision matrix of $\boldsymbol{x}_i$.
By \eqref{cov_matrix_rational2}, we obtain that
\begin{equation}\label{Q_matrix}
	\boldsymbol{Q}_i=\left \{
	\begin{array}{lcl}
		r_i^{-1}(\boldsymbol{L}-p_i \boldsymbol{C})(\boldsymbol{C}^{-1}\boldsymbol{L})^{\lfloor 2\beta \rfloor}     &      & {i = 1,...,m},\\
		\boldsymbol{K}^{-1}_{\lfloor 2\beta \rfloor}    &      & {i = m+1}.\\
	\end{array} \right. 
\end{equation}
Let $\boldsymbol{X}=[\boldsymbol{x}^\top_1,...,\boldsymbol{x}^\top_{m+1}]^\top$. Then, the precision matrix of $\boldsymbol{X}$ is the block diagonal matrix shown in \eqref{Q}.
The final step in order to obtain a GMRF representation is to use the mass lumping technique as for the standard SPDE approach, see Appendix C.5 in \citet{lindgren2011}. 
Thus, the mass matrix $\boldsymbol{C}$ in \eqref{Q_matrix} is replaced by a lumped mass matrix  $\tilde{\boldsymbol{C}}$, where $\tilde{\boldsymbol{C}}$ is a diagonal matrix with $ \tilde{\boldsymbol{C}}_{ii} = \sum_{j = 1}^{n_h}\boldsymbol{C}_{ij}$, for $i = 1,...,n_h$. 
With this adjustment, $\boldsymbol{Q}$ in \eqref{Q} is sparse and we thus have obtained a GMRF representation.

\section{Implementation and the \texttt{rSPDE} package}\label{package}

The proposed covariance-based rational approximation method has been implemented in the R package \texttt{rSPDE}. In the following sections, we will use this package to illustrate the performance of the method. In this section, we give a brief introduction to the package and how it can be used in combination with \texttt{R-INLA} for computationally efficient Bayesian inference of latent Gaussian models involving the generalized Whittle--Mat\'ern fields.

The usual workflow of fitting standard SPDE models in \texttt{R-INLA} can be divided into six steps. Namely, constructing the FEM mesh, defining SPDE model, creating a projector matrix, building the INLA stack, specifying the model formula, and finally calling the function \texttt{inla} to fit the model. Details about this can be found in \citet{lindgren2015software}.
To fit a model with a fractional SPDE, this procedure remains the same. The only difference is that when defining the SPDE model, creating the projector matrix and building the index for INLA stack, we use functions from the \texttt{rSPDE} package. These functions are very similar to the corresponding \texttt{R-INLA} functions in terms of functionality. For example, a fractional SPDE model can be created with the command: 
\begin{verbatim}
	model <- rspde.matern(mesh = mesh)
\end{verbatim}
where \texttt{mesh} is a FEM mesh that can be obtained by \texttt{inla.mesh.2d} function from \texttt{R-INLA}. The default order of the rational approximation in this function is $m=2$, which provides a good trade off between  computational cost and accuracy, see Figures~\ref{fig:L2normCov} and \ref{fig:L2normCov50}.
As for the corresponding \texttt{inla.spde2.matern} function that can be used to define non-fractional SPDE models in INLA, one can also set priors for $\kappa$ and $\tau$ in \texttt{rspde.matern}. Further, we can also define a prior for the smoothness parameter $\nu$ or specify $\nu$ so that a SPDE model with a fixed smoothness parameter can be generated. This feature can be used, for example, in the case that one already knows what $\nu$ is or wants to compare two different models with different $\nu$, as we will do in Section \ref{application}.

The projector matrix $\overline{\boldsymbol{A}}$ for a given mesh and observation locations \texttt{loc} is computed as
\begin{verbatim}
	A <- rspde.make.A(mesh = mesh, loc = loc)
\end{verbatim}
As for the creation of the model, the default order of the rational approximation when creating the projector matrix is $m = 2$, which can be changed by the user. The other arguments of the function are the same as those in the corresponding \texttt{R-INLA} function \texttt{inla.spde.make.A}.
In the step of building the INLA stack, usually an index set is needed. The index can be computed with the function \texttt{rspde.make.index}, which replaces the \texttt{R-INLA} function \texttt{inla.spde.make.index} and has the same arguments. With these functions, the fractional models can be used as any other random effect in \texttt{R-INLA}. 
After fitting the model with the \texttt{R-INLA} function \texttt{inla},  posterior samples from a latent field and hyperparameters
can be obtained by using \texttt{inla.posterior.sample}, and posterior distributions of the model parameters can be extracted via the \texttt{rSPDE} function \texttt{rspde.result}.

Besides the INLA-related functions, the \texttt{rSPDE} package also provides various utility functions. For example, once a fractional SPDE model, \texttt{model}, has been created with the \texttt{rspde.matern} function, one can simulate from it by calling \texttt{simulate(model)} to obtain a prior sample from a given choice of parameters, and the marginal log-likelihood from Section~\ref{sec:overview} can be computed by
\begin{verbatim}
	l <- rSPDE.matern.loglike(model, y, A, sigma.e)
\end{verbatim}
Here, \texttt{y} is the observed data and \texttt{sigma.e} is the standard deviation of the measurement noise. In addition, if a model is fitted with this approach, then kriging and posterior sampling can be obtained by using the \texttt{predict} function. For further details and examples, we refer the reader to the vignettes at \url{https://davidbolin.github.io/rSPDE}. 

Finally, \texttt{rSPDE} also provides an interface to the \texttt{inlabru} package \citep{inlabru}, which simplifies the construction of spatial models. This was used in the application in Section~\ref{application}, where
the entire code for defining and fitting the fractional model is:
\begin{verbatim}
	mesh <- inla.mesh.2d(loc = loc, max.edge = c(0.5, 10), cutoff = 0.35)
	spde <- rspde.matern(mesh = mesh, nu.upper.bound = 1)
	res <- bru(z ~ -1 + field(coordinates, model = spde), data = data)
\end{verbatim}
Here \texttt{loc} are the measurement locations and \texttt{data} is a data frame with the locations and observations.

\section{Numerical experiments}\label{sec:numexp}

In this section, we compare the accuracy of the covariance--based rational approximation with the operator-based method from \cite{rational_spde}, and with the ``parsimonious'' method from \cite{lindgren2011}. Since the latter method is implemented in R-INLA, we refer to it as the INLA approximation. We also note that the INLA method constructs a covariance-based Markov approximation \citep[see also][Section 2]{rational_spde}, so it can be viewed as a $0$th order covariance-based rational approximation. 

For the comparison, we consider the SPDE model \eqref{eq:spde_intro} with homogeneous zero Neumann boundary conditions on the unit square $\mathcal{D} = [0,1]^2$, with $\tau$ chosen such that $\sigma^2$ in the Mat\'ern covariance is one. 
The reason we consider the square domain, is that we have an explicit expression for the covariance function of the solution $u$. Indeed, we have, from \citet[Eq. (2.13)]{khristenkoetal}, that the covariance function of $u$ is given by
\begin{equation}
	\begin{split}\label{eq:foldedmatern}
		\varrho_u^\beta\left(\mathbf{x},\mathbf{y}\right) = \sum_{\mathbf{k} \in \mathbb{Z}^2} &\Big[\varrho(\|\mathbf{x} + 2\mathbf{k} - \mathbf{y}\|)
		+\varrho(\|(x_1+2k_1-y_1,x_2+2k_2+y_2)\|)\\
		&+\varrho(\|(x_1+2k_1+y_1,x_2+2k_2-y_2)\|)
		+\varrho(\|\mathbf{x} + 2\mathbf{k} + \mathbf{y}\|)\Big],
	\end{split}
\end{equation}
where $\|\cdot\|$ is the Euclidean norm on $\mathbb{R}^2$ and $\varrho(\cdot)$ is the Mat\'ern covariance function in \eqref{eq:matern} with $\sigma=1$ and $\nu = 2\beta - 1$.
To compare the accuracy of the covariance approximations, we evaluate the true and approximate covariance functions on a regular mesh on $[0,1]^2$ with $N =  100$ equally spaced nodes on each axis. 
We will compare these approximations with respect to the ${L_2([0,1]^2\times [0,1]^2)}$-norm and the supremum norm on $[0,1]^2\times [0,1]^2$. Appendix~\ref{sec:plot_likelihooderror} shows how we approximate the errors in these two norms in detail.

\begin{figure}[t]
	\centering 
	\includegraphics[width=1\textwidth]{./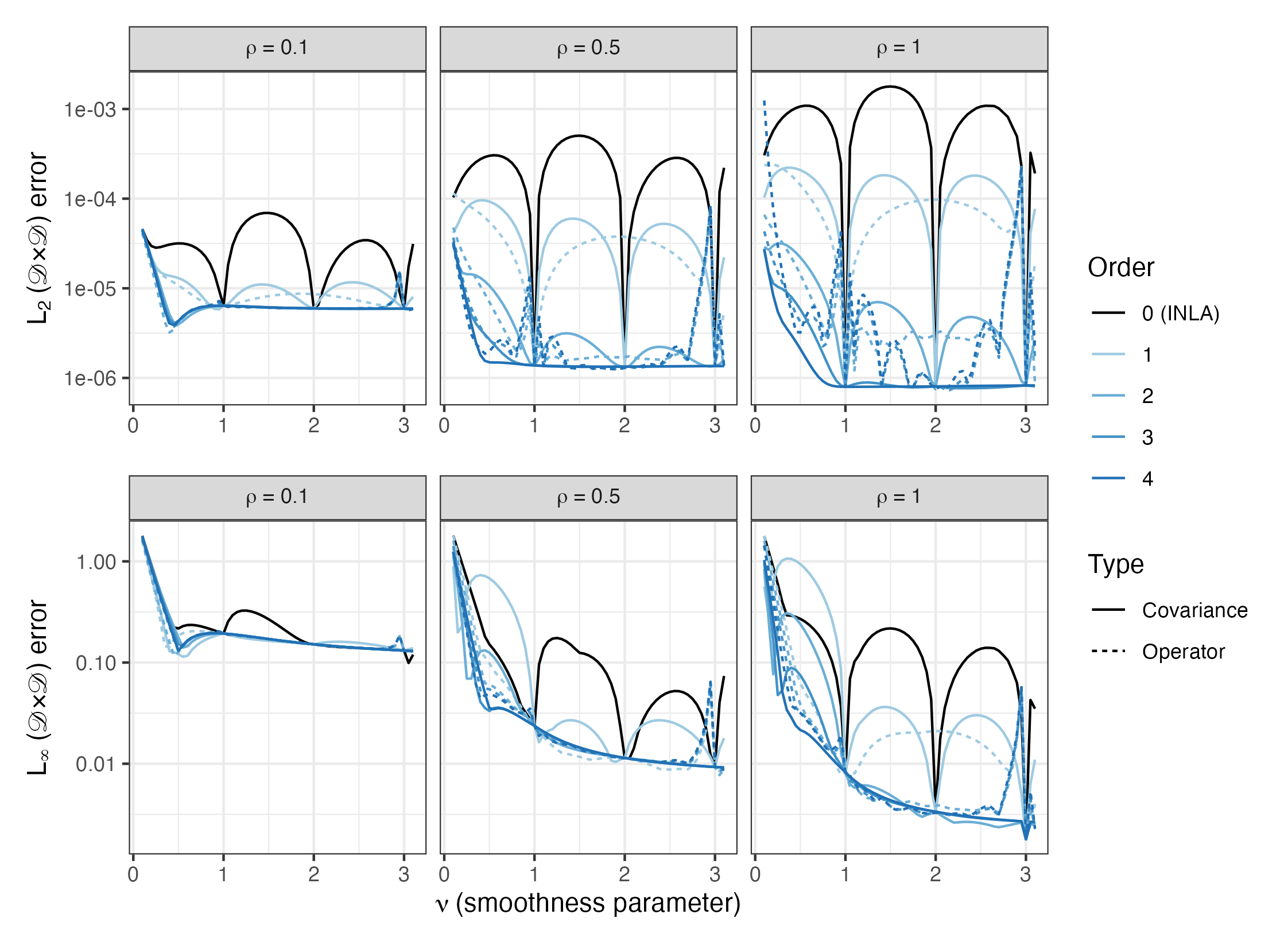}
	\caption{Errors in $L_2(\mathcal{D}\times \mathcal{D})$-norm (top) and supremum norm ($L_{\infty}(\mathcal{D}\times \mathcal{D})$) (bottom) on $\mathcal{D} = [0,1]^2$ for different practical ranges $\rho$ for different values of $\nu$. All methods use the same FEM mesh, with $100$ equally spaced nodes in each direction.}
	\label{fig:L2normCov}
\end{figure}

For the operator-based and covariance-based rational approximations, we consider the orders of rational approximation as $m = 1,2,3,4$. We choose smoothness parameters ranging from $0.1$ to $3.1$ with steps of size $0.05$. Further, we test three possible values of $\kappa$. These values of $\kappa$, say $\kappa_1(\nu), \kappa_2(\nu)$ and $\kappa_3(\nu)$ are chosen in such a way that the practical range $\rho = \sqrt{8\nu}/\kappa$ is fixed as $0.1, 0.5$ and $1$, respectively, for all values of $\nu$. The resulting errors for the different methods are shown in Figure \ref{fig:L2normCov}. 

We begin by observing that for smoothness parameters $\nu=1,2$ or $3$, there is no rational approximation and the errors only come from the FEM approximation. With this in mind, one should note that for smaller range parameters most of the approximation error comes from the FEM approximation, thus yielding a small difference of errors across the different methods. However, for larger ranges, such as, in this case, practical range equal to $1$, the errors have different orders of magnitude as the order of the rational approximation increases, with the errors from the operator-based and covariance-based approximations of same rational approximation order having approximately the same order of magnitude. Furthermore, we can observe numerical instabilities of the operator-based approximations of order 3 and 4 as $\nu$ increases for both practical ranges $0.5$ and $1$, whereas the covariance-based method is stable for all orders of approximation. 

In order to further illustrate the effect of the FEM error on the rational approximation of the covariance operator we repeated the analysis from above but with a coarser FEM mesh, consisting of 50 equally spaced nodes on each axis over the domain $[0,1]^2$. The results are shown in Figure \ref{fig:L2normCov50} in Appendix~\ref{sec:plot_likelihooderror}. We now observe that for practical range $0.1$, there is no visible difference between the covariance-based or operator-based rational approximations of orders 1 to 4, with a very small difference between the ``parsimonious'' INLA approximation and the remaining rational approximations. Further, for practical ranges $0.5$ and $1$, we hardly see any differences between the rational approximations of orders $2$, $3$ and $4$. The only noticeable difference being that for large values of $\nu$, the operator-based rational approximation becomes numerically unstable. On the other hand, it is noteworthy that for practical ranges $0.5$ and $1$, there is a significant difference (difference in orders of magnitude) between the rational approximations of order 0, 1 and the remaining orders.

To summarize, the results indicate that the covariance-based method generally has a similar accuracy as the operator-based method, which is higher than the accuracy provided by INLA's method. The results also show that the covariance-based method is more numerically stable, especially for larger values of $m$, the order of the rational approximation.

It is important to remember that the INLA method only provides a fixed approximation, furthermore it only works in this case of stationary parameters, whereas the other methods are applicable also for non-stationary models and can be made arbitrarily precise by increasing the order $m$. As previously mentioned, the operator-based method is not suitable for inference in R-INLA, but the covariance-based method is. Thus, in conclusion, the covariance-based method provides a method that facilitates inference for stationary and non-stationary fractional SPDE-based models in R-INLA, which is also more accurate than the current INLA method for stationary models. 

The numerical experiments in this section were implemented using the \texttt{rSPDE} package. All plots in this section, along with several more, for different choices of all the parameters involved, can be found in a \texttt{shiny} \citep{shiny} app available at \url{https://github.com/davidbolin/rSPDE}. The results above were obtained by the Clenshaw–Lord Chebyshev–Pad\'e algorithm with $\delta = 0$ (see Section \ref{GMRF}). 
The \texttt{shiny} app also contains the results by the BRASIL algorithm and the Clenshaw–Lord Chebyshev–Pad\'e algorithm with $\delta = 10^{-(5+m)/2}$.
We also include results on likelihood errors in Appendix~\ref{sec:plot_likelihooderror}.

\section{Application}
\label{application}

In this section, we illustrate the usage of the covariance-based rational approximation method through an application to a spatial data set of precipitation observations. 
The dataset, available at \url{https://www.image.ucar.edu/Data/precip_tapering/} contains annual precipitation anomalies observed by weather stations in the United States (standardized by the long-run mean and standard deviation for each station). We study the data from the year $1962$, which contains observations from 7352 stations throughout the contiguous United States.
We chose this dataset because it is simple enough to use a stationary model, which allows us to highlight the advantages of the fractional model without having to construct a complicated hierarchical model.
\citet{cov_taper} also studied this data as an illustration for the covariance tapering method.

\begin{figure}[t]
	\centering
	\includegraphics[width=0.8\textwidth]{./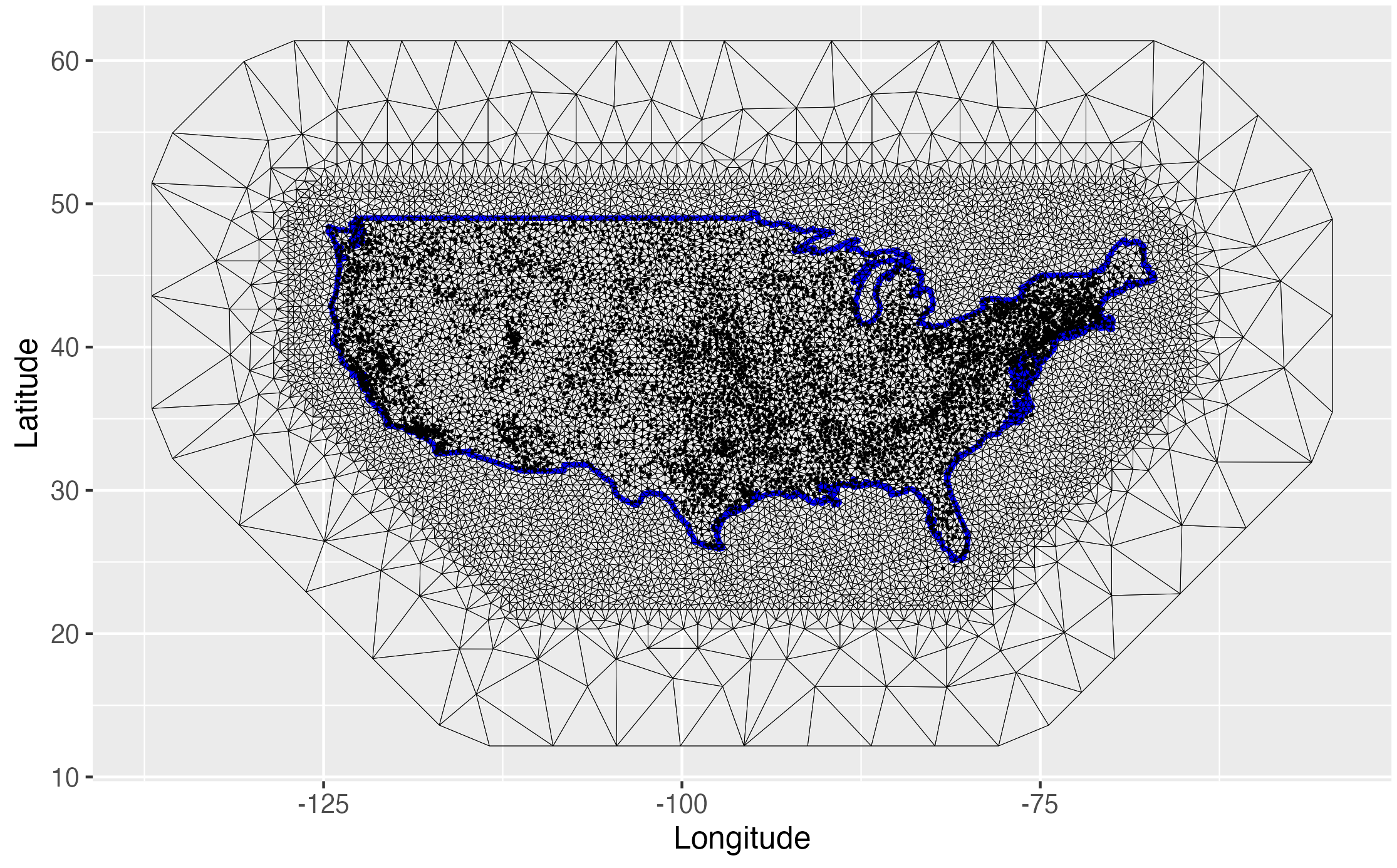}
	\caption{
		The finite element mesh over the contiguous US and the stations shown in dots.
	}
	\label{fig:map_mesh}
\end{figure}

We model the data by \eqref{hiermodel} where $\pmb{\epsilon}$ is independent Gaussian measurement noise with $\boldsymbol{Q}_{\pmb{\epsilon}} = \sigma_{\epsilon}^{-2}\mathbf{I}$ and $u$ is a Whittle-Mat\'ern field obtained as a solution to \eqref{eq:spde_intro}, where $\mathcal{D}$ is a bounded region (see Figure \ref{fig:map_mesh}).
The field is discretized using a finite element mesh that covers the contiguous United States with 9485 nodes.
Figure \ref{fig:map_mesh} shows the mesh and the 7352 stations. 
Our interest is to compare the stationary SPDE models with either a fractional smoothness parameter $\nu$ (referred to as the fractional model) or a fixed parameter $\nu = 1$ (referred to as the integer model) in terms of predictive power.
In order to more easily interpret the parameters, we consider a parameterization of the Whittle--Mat\'ern field in terms the standard deviation $\sigma =  \sqrt{\Gamma(\nu)} / (\tau \kappa^{\nu} \sqrt{(4\pi) \Gamma(\nu + 1)})$, the practical correlation range  $\rho = \sqrt{8\nu}/\kappa$, and the smoothness $\nu$.
The prior distributions for the parameters are chosen as the default choices from the \texttt{rSPDE} package.
That is, the priors of $\log(\rho)$ and $\log(\sigma)$ are independent Gaussian distributions with variance $10$ and the mean values are chosen based on size of the domain. Further, the prior of $\nu$ is a Beta distribution on the interval $(0, 1)$ with mean $1/2$ and variance $1/16$. The choice of prior for $\nu$ is motivated by the fact that we do not believe that this should be a very smooth field. We also tested with Beta distributions on larger intervals  and found that this did not affect the parameter estimates or the predictive performance of the model much.  

We fit the models using \texttt{R} (version 4.2.1) and the \texttt{rSPDE} package (version 2.2.0) combined with \texttt{inlabru} (version 2.7.0) and \texttt{R-INLA} (version 23.02.17) running on a machine with an Intel i9-12900KF CPU, 64GB RAM and an Ubuntu operating system.
The complete code for the analysis can be found in the supplementary materials. 
The total time for fitting the fractional and integer models are 38.4s and 15.4s, respectively.
\begin{figure}[t]
	\centering
	\includegraphics[width=0.8\linewidth]{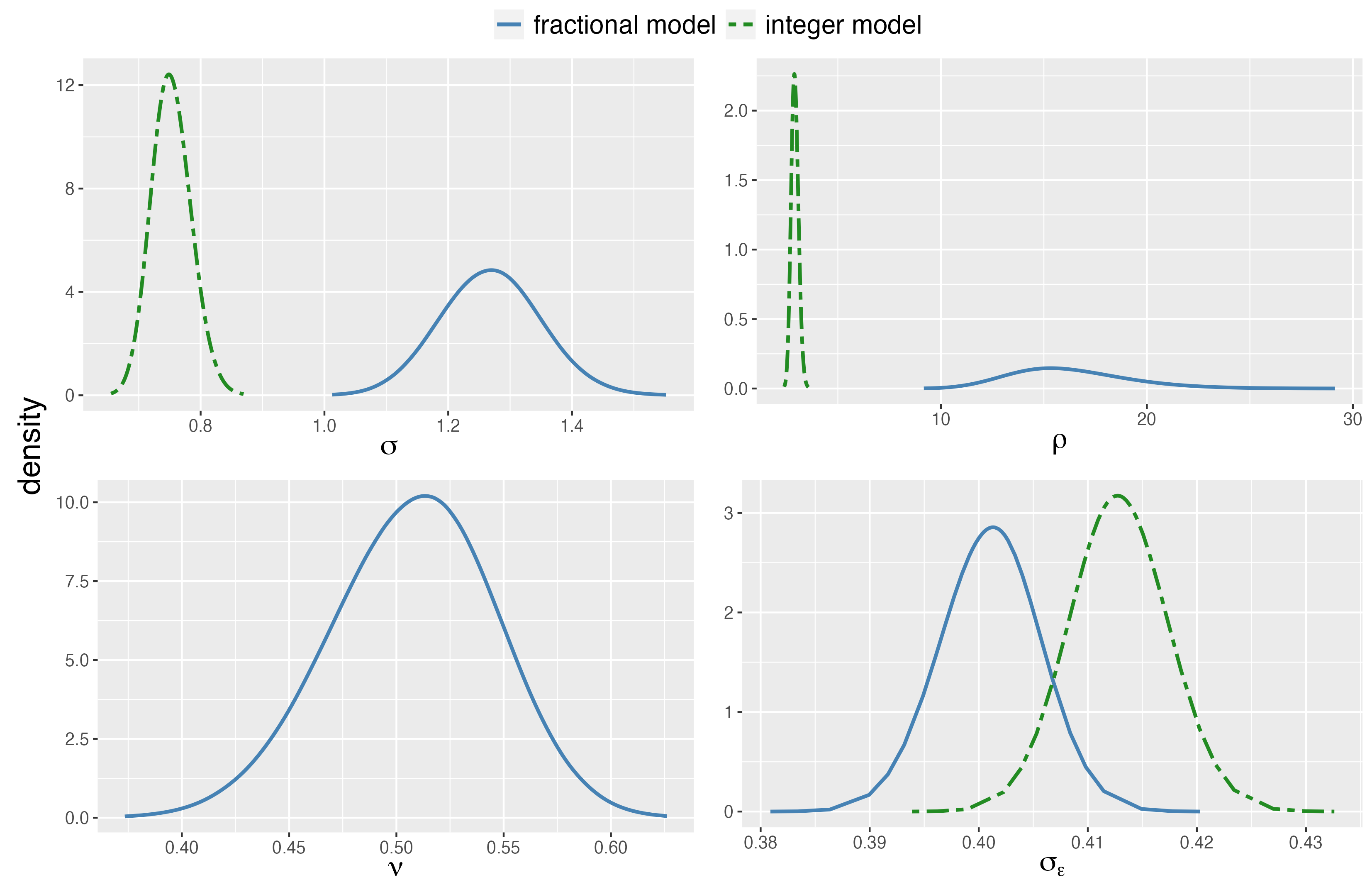}
	\caption{
		Posterior distributions of $\sigma$, $\rho$, $\nu$ and $\sigma_{\epsilon}$.
	}
	\label{posterior_plot}
\end{figure}

The posterior distributions of the parameters of the Gaussian field  and standard deviation of measurement noise for the two models are shown in Figure \ref{posterior_plot}. 
One can note that the posterior mode of $\nu$ for the fractional model is around $0.52$, which indicates that a fractional smoothness is needed. Compared to the fractional model, the integer model has a smaller $\sigma$ and a larger $\sigma_{\epsilon}$, indicating that the latent field explains less of the variability of the data. Finally, the practical correlation range  of the integer model is substantially smaller than that of the fractional model, which likely is caused by the fact that a small range is needed to better explain the short range behavior of the data if the smoothness parameter is forced to be an integer. 

To further compare the models, we perform two leave-group-out pseudo cross-validation studies \citep{lgocv}.
In the first, for each station, we predict the value of the station based on all data except that from stations that are closer than a certain distance $D$ (referred to as the distance of removed data).
We then vary this distance and compute the accuracy of the predictions as functions of $D$.
In the second, for each station, we instead remove the data from the $k$ nearest stations and compute the accuracy of the predictions as functions of $k$.
According to the screening effect \citep{screening_kriging}, in both cases the removed observations are the most informative.
The quality of the prediction is measured in terms of Mean Squared Errors (MSE) and the negative Log-Score (LS) \citep{logscore}.
Both metrics are negatively oriented, which means that a lower value indicates a better result.
The results of the two cross-validation studies are shown in Figure~\ref{CV_plot_distance}. 
We see that the fractional model outperforms the integer model in both cases. 
For example, the fractional model with the distance of removed data being 400km achieves the same levels of MSE and negative LS as the integer model with the distance of removed data being 300km (indicated by dashed lines).
Also, the fractional model with 125 removed data points achieves the same levels of MSE and negative LS as the integer model with 100 removed data points.
We can note that the two models have similar performance when the distance of removed data and number of removed data are close to zero. This is expected due to the mean-squared continuity of the latent fields, combined with the fact that the models have nugget effects. This means that both models will have an MSE close to the variance of measurement error when the distance of removed data or number of nearest removed data are close to zero.

\begin{figure}[t]
	\centering
		\includegraphics[width=0.8\linewidth]{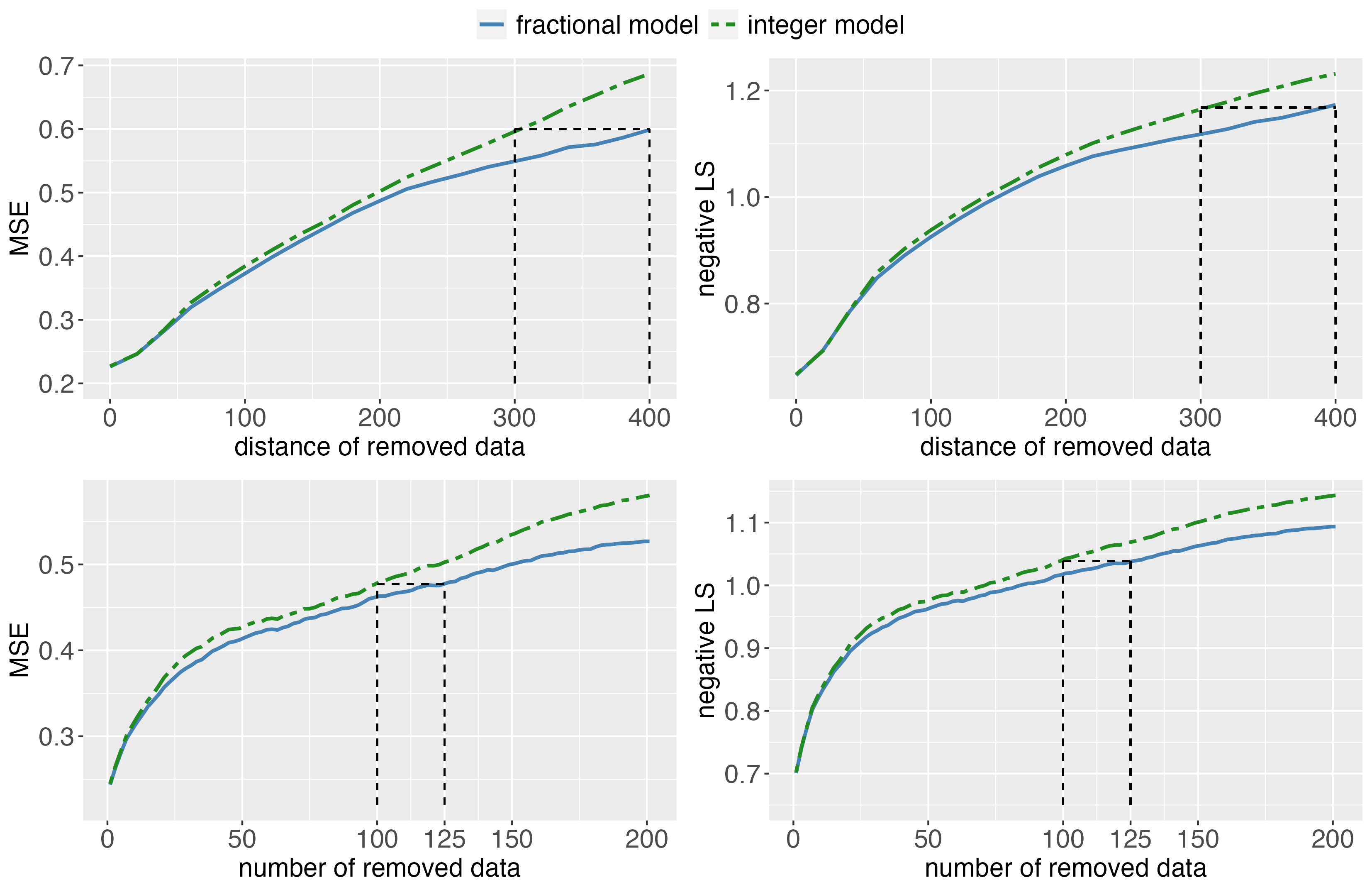}
	  \caption{MSE and negative Log-Score as functions of distance (in km) of removed data (top) and number of removed data (bottom). }
	\label{CV_plot_distance}
\end{figure}

\section{Discussion}
\label{discussion}

We have introduced a new rational SPDE approach which provides stable and computationally efficient approximations for the covariance structure of generalized Whittle--Mat\'ern Gaussian random fields with general smoothness $\beta > d/4$. We further derived an explicit rate of convergence of the method, which provides a theoretical justification for the approach. 
Compared to the rational SPDE approach of \cite{rational_spde}, the main advantage is that we obtain a GMRF representation of the approximation. This allowed us to implement the method so that fractional SPDE models now can be estimated in  \texttt{R-INLA}, where we in particular can estimate the smoothness parameter from data.  

The current version of \texttt{rSPDE} has truncated log-normal and beta priors for the smoothness parameter $\nu$ as possible choices. A natural question for future research is how this prior should be chosen in a more systematic way. A potential way to do this is following the idea of penalized complexity priors (PC-priors) \citep{simpson2017penalising}. \cite{fuglstad2019constructing} derived PC-priors for $\kappa$ and $\tau$ of the Whittle--Mat\'ern fields assuming a fixed value of $\nu$. We plan to extend that work by deriving PC-priors for all three parameters. 
Another potential area of future work is to extend the proposed method to spatio-temporal SPDE models as those proposed by \citet{bakka_diffusion-based_2020}.

\newpage
\begin{appendix}
\section{Additional notation}
\label{notation}
% \begin{remark}\label{rem:L_inverse_selfadjoint}
	% 	For a bounded linear operator $T$ on a Hilbert space $H$, $T$ is a self-adjoint operator means that $(T f, g )_{H} = (f,T g)_{H}$ for all $f,g \in H$. 
	% 	$L^{-1}$ is compact on $L_2(\mathcal{D})$, thus it is bounded on $L_2(\mathcal{D})$. 
	% 	For any $g \in L_2(\mathcal{D})$, $L^{-1}g\in V$. 
	% 	Thus, for any $f,g \in L_2(\mathcal{D})$, let $Lu = f$, by the symmetric bilinear form from \eqref{bilinear_form}, we have that 
	% 	$$(f,L^{-1}g)_{L_2(\mathcal{D})} = a_L(u,L^{-1}g)_{L_2(\mathcal{D})} = a_L(L^{-1}g,u)_{L_2(\mathcal{D})} = (LL^{-1}g,u)_{L_2(\mathcal{D})} = (g,L^{-1}f)_{L_2(\mathcal{D})}$$ 
	% 	Thus, $L^{-1}$ is self-adjoint on $L_2(\mathcal{D})$. 
	% 	\end{remark} 

In this section, we introduce some notation that we will use for the technical details in the following sections.
Let $(E,\|\cdot\|_E)$ and $(F,\|\cdot\|_F)$ be two separable Hilbert 
spaces with norms $\|\cdot\|_E$ and $\|\cdot\|_F$ respectively. 
Then $(E,\|\cdot\|_E) \hookrightarrow (F,\|\cdot\|_F)$ means that 
$E \subset F$ and there exists a constant $C$ such that for any 
$x \in E$, we have $\|x\|_F \leq C\|x\|_E$. In this case, we say that 
$E$ is continuously embedded in $F$. If 
$(E,\|\cdot\|_E) \hookrightarrow (F,\|\cdot\|_F)\hookrightarrow 
(E,\|\cdot\|_E)$, we write 
$(E,\|\cdot\|_E) \cong (F,\|\cdot\|_F)$. We let $\mathcal{L}(E,F)$ denote the Banach space of bounded linear operators from $E$
to $F$ endowed with the operator norm,
that is, $\|A\|_{\mathcal{L}(E,F)} = \sup_{\|u\|_{E}=1} \|Au\|_{F}$,
where $A\in \mathcal{L}(E,F)$. Similarly, we let $\mathcal{L}_2(E,F)$ denote
the Banach space of Hilbert-Schmidt operators, endowed with the Hilbert-Schmidt
norm, that is, $\|A\|_{\mathcal{L}_2(E,F)}^2 = \sum_{i\in\mathbb{N}} \|Ae_i\|_{F}^2$,
where $\{e_i\}_{i\in\mathbb{N}}$ is a complete orthonormal set in 
$(E,\|\cdot\|_{E})$ and $A\in \mathcal{L}_2(E,F)$. 
We let $\mathcal{L}(E)$ denote $\mathcal{L}(E,E)$, with  norm $\|\cdot\|_{\mathcal{L}(E)}$,
and $\mathcal{L}_2(E)$ denote $\mathcal{L}_2(E,E)$,
with  norm $\|\cdot\|_{\mathcal{L}_2(E)}$. 
At last, if $E\subset F$, we let $I_{E,F}$ denote the inclusion map 
from $E$ to $F$.

Recall that a bounded linear operator $T$ on a Hilbert space $E$ is self-adjoint if, for all $f,g \in E$, $(T f, g )_{E} = (f,T g)_{E}$. Now let us show that $L^{-1}$ is self-adjoint, where $L$ is the operator from Section~\ref{sec:continuous_model}. We have that
$L^{-1}$ is compact and thus bounded on $L_2(\mathcal{D})$. 
For any $g \in L_2(\mathcal{D})$, $L^{-1}g\in V$ ($V$ is defined in Section~\ref{sec:continuous_model}). 
Thus, for any $f,g \in L_2(\mathcal{D})$, let $Lu = f$, so that $u = L^{-1}f$. By the symmetry of the bilinear form \eqref{bilinear_form}, we have that $L^{-1}$ is self-adjoint on $L_2(\mathcal{D})$:
$$(f,L^{-1}g)_{L_2(\mathcal{D})} = a_L(u,L^{-1}g)_{L_2(\mathcal{D})} = a_L(L^{-1}g,u)_{L_2(\mathcal{D})} = (LL^{-1}g,u)_{L_2(\mathcal{D})} = (g,L^{-1}f)_{L_2(\mathcal{D})}.$$

\section{Proofs of Proposition \ref{cov_fem_approx_rate} and  Proposition \ref{ra_bound}}\label{prooffemconvergence}
Let us start by providing some relations between the eigenvalues of $L$ and $L_h$. Recall, from Section \ref{sec:prelim}, that $\{\lambda_j\}_{j\in\mathbb{N}}$ are eigenvalues of $L$ and $\{\lambda_{j,h}\}_{j=1}^{n_h}$ are eigenvalues of $L_h$, both given in non-decreasing order. We have the following standard result:
\begin{proposition}\label{eigenval}
	Under Assumption \ref{assumpFE}, we have that 1. $\lambda_{n_h,h} \lesssim \lambda_{n_h} \lesssim n_h^{2/d}$ for sufficiently small $h\in (0,1)$ \citep[][Theorem 6.1]{strang_fix}; 2. $\lambda_j\leq \lambda_{j,h}$  (due to the min-max principle); and 3. $n_h \lesssim h^{-d}$ (due to quasi-uniformity of the triangulation).
\end{proposition}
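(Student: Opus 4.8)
The plan is to establish the three claims separately, drawing on the Courant--Fischer min-max characterization of eigenvalues, Weyl's law from Remark~\ref{rem:WeylsLaw}, and the quasi-uniformity imposed in Assumption~\ref{assumpFE}. I would dispatch claim 2 first, as it is purely structural. Both spectra admit the variational description
\begin{equation*}
	\lambda_j = \min_{\substack{S \subset V\\ \dim S = j}} \max_{v \in S\setminus\{0\}} \frac{a_L(v,v)}{\|v\|_{L_2(\mathcal{D})}^2}, \qquad \lambda_{j,h} = \min_{\substack{S \subset V_h\\ \dim S = j}} \max_{v \in S\setminus\{0\}} \frac{a_L(v,v)}{\|v\|_{L_2(\mathcal{D})}^2},
\end{equation*}
which are legitimate since $a_L$ is the symmetric, coercive form \eqref{bilinear_form}. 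Because $V_h \subset V$, every $j$-dimensional subspace of $V_h$ is an admissible competitor in the minimization defining $\lambda_j$, so the infimum over the strictly larger family of subspaces can only decrease, giving $\lambda_j \le \lambda_{j,h}$ at once.

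For claim 3 I would run a volume count. Assumption~\ref{assumpFE} forces each element $T\in\mathcal{T}_h$ to contain an inscribed ball of radius $\rho_T \ge K_1 h_T \ge K_1 K_2 h$, whence $\mathrm{vol}(T)\gtrsim h^d$ uniformly in $T$; since the elements tile the fixed bounded set $\overline{\mathcal{D}}$, their number is $\lesssim \mathrm{vol}(\mathcal{D})\,h^{-d}\lesssim h^{-d}$. As $n_h$ counts the (interior and/or boundary, depending on the boundary condition) vertices carrying the continuous piecewise-linear basis, and this is bounded by a fixed multiple of the number of elements, $n_h\lesssim h^{-d}$ follows. The same count, using $h_T\le h = \max_{T}h_T$ to bound element volumes from above, also yields the matching lower bound $n_h\gtrsim h^{-d}$, which I will reuse below.

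Claim 1 splits into two inequalities. The bound $\lambda_{n_h}\lesssim n_h^{2/d}$ is immediate from the upper half of Weyl's law in Remark~\ref{rem:WeylsLaw}. For $\lambda_{n_h,h}\lesssim\lambda_{n_h}$ I would combine an inverse inequality with the lower half of Weyl's law: on a quasi-uniform family one has $a_L(v_h,v_h)\lesssim h^{-2}\|v_h\|_{L_2(\mathcal{D})}^2$ for all $v_h\in V_h$ (using $\|\nabla v_h\|_{L_2}\lesssim h^{-1}\|v_h\|_{L_2}$ together with the boundedness of $\boldsymbol{H}$ and $\kappa$), so the largest Rayleigh quotient obeys $\lambda_{n_h,h}\lesssim h^{-2}$; meanwhile the lower bound $n_h\gtrsim h^{-d}$ from claim 3 and the lower half of Weyl's law give $\lambda_{n_h}\gtrsim n_h^{2/d}\gtrsim h^{-2}$. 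Chaining these produces $\lambda_{n_h,h}\lesssim h^{-2}\lesssim\lambda_{n_h}$, which is exactly the assertion of \citet[Theorem 6.1]{strang_fix}.

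The main obstacle, and the only genuinely analytic point, is the inequality $\lambda_{n_h,h}\lesssim\lambda_{n_h}$, since it requires the inverse estimate paired with the matching two-sided control $n_h\asymp h^{-d}$ and the full Weyl's law; claims 2 and 3 are structural consequences of $V_h\subset V$ and of mesh geometry. In practice the work reduces to verifying that our hypotheses supply the assumptions of the cited Strang--Fix result: quasi-uniformity (Assumption~\ref{assumpFE}) for the inverse inequality and the mesh count, and Weyl's law in the Neumann case, whose validity under Assumption~\ref{assumpOperator} is recorded in Remark~\ref{rem:WeylsLaw}.
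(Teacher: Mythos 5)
Your proof is correct and follows essentially the route the paper itself indicates for this (stated-as-standard, uncited-in-detail) result: the min-max principle over the nested subspaces $V_h\subset V$ for claim 2, the quasi-uniform volume count for claim 3, and for claim 1 a sound reconstruction of the content of the cited Strang--Fix bound by pairing the inverse inequality $\|\nabla v_h\|_{L_2(\mathcal{D})}\lesssim h^{-1}\|v_h\|_{L_2(\mathcal{D})}$ with the two-sided count $n_h\asymp h^{-d}$ and Weyl's law. The only cosmetic remark is that your argument for claim 1 (like the proposition's own citation) implicitly invokes Assumptions \ref{assumpDomain} and \ref{assumpOperator} through Weyl's law, not merely Assumption \ref{assumpFE} as the header suggests; you flag this yourself at the end.
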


Let, now, 
$\dot{H}_L^\sigma(\mathcal{D}) := \mathscr{D}(L^{\sigma/2}) = \bigl\{\psi\in L_2(\mathcal{D}):  \sum_{j\in\mathbb{N}}  \lambda_j^{\sigma} \<\psi, e_j\>_{L_2(\mathcal{D})}^2<\infty\bigr\},$
with inner product and norm given by
$$(\psi,\phi)_{\dot{H}_L^\sigma(\mathcal{D})} = (L^{\sigma/2}\psi, L^{\sigma/2}\phi)_{L_2(\mathcal{D})} = \sum_{j\in\mathbb{N}}{\lambda_j^\sigma} \<\psi, e_j\>_{L_2(\mathcal{D})}\<\phi, e_j\>_{L_2(\mathcal{D})}$$
and
$\|\psi\|_{\dot{H}_L^\sigma(\mathcal{D})}^2 = \<\psi, \psi\>_{\dot{H}_L^\sigma(\mathcal{D})}$, respectively.
Further, we define $[H_1,H_2]_\sigma$ as the real interpolation between the Hilbert spaces $H_1$ and $H_2$ \citep[see][Appendix A for a brief review of real interpolation of Hilbert spaces]{bolinsimaswallin}. 

We consider the fractional Sobolev space of order $\sigma$, with $0<\sigma<2$, $\sigma\neq 1$, given by 
$$H^\sigma(\mathcal{D}) = \begin{cases}
	[L_2(\mathcal{D}), H^1(\mathcal{D})]_{\sigma},&\hbox{for }0 < \sigma  <1,\\
	[H^1(\mathcal{D}),H^2(\mathcal{D})]_{\sigma-1},&\hbox{for } 1<\sigma < 2.
\end{cases}
$$

By \citet[][Lemma 2]{cox_Kirchner}, we have that with Dirichlet boundary conditions
$$(\dot{H}_L^\sigma, \|\cdot\|_{\dot{H}_L^\sigma(\mathcal{D})}) \cong ([L_2(\mathcal{D}), H^1_0(\mathcal{D})]_\sigma, \|\cdot\|_{[L_2(\mathcal{D}), H_0^1(\mathcal{D})]_\sigma}),\quad 0 < \sigma  < 1,$$
$$(\dot{H}_L^\sigma, \|\cdot\|_{\dot{H}_L^\sigma(\mathcal{D})}) \hookrightarrow (H^\sigma(\mathcal{D}), \|\cdot \|_{H^\sigma(\mathcal{D})}), 0 < \sigma <1,$$
where the norms $\|\cdot\|_{\dot{H}_L^\sigma(\mathcal{D})}$ and $\|\cdot\|_{H^\sigma(\mathcal{D})}$ are equivalent on $\dot{H}_L^\sigma(\mathcal{D})$ for $\sigma\neq 1/2$ and also
\begin{equation}\label{congruence}
	(\dot{H}_L^\sigma(\mathcal{D}),\|\cdot\|_{\dot{H}_L^\sigma(\mathcal{D})}) \cong (H^\sigma(\mathcal{D})\cap H_0^1(\mathcal{D}), \|\cdot\|_{H^\sigma(\mathcal{D})}),\quad 1\leq \sigma\leq 2.
\end{equation}

We want to apply \citet[][Theorem 1]{cox_Kirchner}, however it was only proved under Dirichlet boundary conditions. Therefore, we need some additional auxiliary results to conclude an analogous result in the case of Neumann boundary conditions. To this end, we need to prove the following result, which is a version of \citet[][Lemma~2]{cox_Kirchner} for Neumann boundary conditions:
\begin{proposition}\label{prp:hdotneumann}
	Under Neumann boundary conditions we have
	\begin{equation}\label{eq:hdotneuman1}
		(\dot{H}^\sigma_L, \|\cdot\|_{\dot{H}_L^\sigma(\mathcal{D})}) \cong (H^\sigma(\mathcal{D}), \|\cdot\|_{H^\sigma(\mathcal{D})}),\quad 0\leq \sigma\leq 1,
	\end{equation}
	\begin{equation}\label{eq:hdotneumann2}
		(\dot{H}_L^\sigma, \|\cdot\|_{\dot{H}_L^\sigma(\mathcal{D})}) \cong ([H^1(\mathcal{D}), H^2_\mathcal{N}(\mathcal{D})]_{\sigma-1}, \|\cdot\|_{[H^1(\mathcal{D}), H_\mathcal{N}^2(\mathcal{D})]_{\sigma-1}}),\quad 1 \leq \sigma  \leq 2,
	\end{equation}
	where $H^2_\mathcal{N}(\mathcal{D})$ was defined in Section \ref{sec:prelim}. Moreover,
	\begin{equation}\label{eq:hdotneumann3}
		(\dot{H}_L^\sigma, \|\cdot\|_{\dot{H}_L^\sigma(\mathcal{D})}) \hookrightarrow (H^\sigma(\mathcal{D}), \|\cdot \|_{H^\sigma(\mathcal{D})}), 1 < \sigma <2,
	\end{equation}
	where the norms $ \|\cdot\|_{\dot{H}_L^\sigma(\mathcal{D})}$ and $\|\cdot\|_{H^\sigma(\mathcal{D})}$ are equivalent on $\dot{H}_L^\sigma(\mathcal{D})$ for $\sigma\neq 3/2$.
\end{proposition}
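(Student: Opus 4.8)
The plan is to follow the proof of the Dirichlet analogue in \citet[Lemma~2]{cox_Kirchner}, replacing the form domain $H_0^1(\mathcal{D})$ by $V = H^1(\mathcal{D})$ and the domain of $L$ by $H^2_\mathcal{N}(\mathcal{D})$, and then to obtain the fractional-order identities by real interpolation between these integer-order anchors. Two standard facts will be used throughout: that the domains of the fractional powers form an exact interpolation scale, $[\dot H_L^{\sigma_0}, \dot H_L^{\sigma_1}]_\theta \cong \dot H_L^{(1-\theta)\sigma_0+\theta\sigma_1}$, valid for any positive self-adjoint operator with compact inverse \citep[see][Appendix A]{bolinsimaswallin}; and the functoriality and monotonicity of real interpolation, namely that congruent pairs interpolate to congruent spaces and that $E_1\hookrightarrow F_1$ implies $[E_0,E_1]_\theta\hookrightarrow[E_0,F_1]_\theta$.

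First I would establish the three integer-order anchors. The case $\sigma=0$ is trivial since $\dot H_L^0 = L_2(\mathcal{D}) = H^0(\mathcal{D})$. For $\sigma=1$, the crucial point---and the only structural difference from the Dirichlet setting---is that $\|\psi\|_{\dot H_L^1}^2 = (L^{1/2}\psi, L^{1/2}\psi)_{L_2(\mathcal{D})} = a_L(\psi,\psi)$, so that continuity and coercivity of the bilinear form \eqref{bilinear_form} on $V=H^1(\mathcal{D})$ yield $\dot H_L^1 \cong H^1(\mathcal{D})$ with equivalent norms; coercivity on all of $H^1$ (rather than on $H_0^1$) is exactly where the extra hypothesis $\essinf_{x}\kappa(x)\geq\kappa_0>0$ of Assumption~\ref{assumpOperator} enters, and the identity $\mathscr{D}(L^{1/2})=V$ is the standard form-domain characterization of the operator associated with the closed coercive form $a_L$. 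For $\sigma=2$, Remark~\ref{rem:H2regul} gives $\dot H_L^2 = \mathscr{D}(L) = H^2_\mathcal{N}(\mathcal{D})$, and the accompanying $H^2$-regularity estimate makes the graph norm equivalent to $\|\cdot\|_{H^2(\mathcal{D})}$ on $H^2_\mathcal{N}(\mathcal{D})$.

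With the anchors in hand, the interpolation identities follow directly. For $0<\sigma<1$, the scale property combined with the anchors at $\sigma=0,1$ gives $\dot H_L^\sigma \cong [\dot H_L^0,\dot H_L^1]_\sigma \cong [L_2(\mathcal{D}),H^1(\mathcal{D})]_\sigma = H^\sigma(\mathcal{D})$, which is \eqref{eq:hdotneuman1}; note there is no critical-exponent issue at $\sigma=1/2$ here precisely because the upper anchor $H^1$ carries no boundary constraint. For $1<\sigma<2$, the anchors at $\sigma=1,2$ give $\dot H_L^\sigma \cong [H^1(\mathcal{D}),H^2_\mathcal{N}(\mathcal{D})]_{\sigma-1}$, which is \eqref{eq:hdotneumann2}, and the embedding in \eqref{eq:hdotneumann3} is then immediate from $H^2_\mathcal{N}(\mathcal{D})\hookrightarrow H^2(\mathcal{D})$ and interpolation monotonicity.

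The main obstacle is the asserted norm equivalence in \eqref{eq:hdotneumann3} for $\sigma\neq 3/2$, i.e.\ that the Neumann constraint does not degrade the interpolation norm away from the critical exponent. The decisive fact is that the normal-derivative trace $\gamma_1 u = \partial u/\partial\nu|_{\partial\mathcal{D}}$ is bounded on $H^s(\mathcal{D})$ only for $s>3/2$; writing $H^2_\mathcal{N}=\ker(\gamma_1|_{H^2})$, I would invoke the classical interpolation theory of spaces cut out by a boundary operator to conclude that for $1<\sigma<3/2$ the constraint is invisible, so that $[H^1,H^2_\mathcal{N}]_{\sigma-1}\cong H^\sigma(\mathcal{D})$, whereas for $3/2<\sigma<2$ the interpolation space equals $\{u\in H^\sigma(\mathcal{D}):\partial u/\partial\nu = 0\}$, on which $\|\cdot\|_{H^\sigma(\mathcal{D})}$ remains equivalent to the interpolation norm. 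The exclusion of $\sigma=3/2$ is the usual failure of this identification at $\sigma-1=1/2$, where a Lions--Magenes-type correction makes the constrained space strictly smaller than the naive subspace. I expect this trace step to be the delicate part, since it requires transferring the classical smooth-boundary trace and extension estimates to the convex polytope of Assumption~\ref{assumpDomain}, using the Lipschitz, uniformly positive-definite coefficient $\boldsymbol{H}$ to control the conormal derivative face by face.
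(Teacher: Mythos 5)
Your proposal follows essentially the same route as the paper: establish the integer-order anchors $\dot H_L^0=L_2(\mathcal{D})$, $\dot H_L^1\cong H^1(\mathcal{D})$ (via coercivity of $a_L$ on $V=H^1(\mathcal{D})$, which is where $\essinf\kappa\geq\kappa_0>0$ enters), and $\dot H_L^2\cong H^2_{\mathcal{N}}(\mathcal{D})$, then interpolate, and finally identify $[H^1(\mathcal{D}),H^2_{\mathcal{N}}(\mathcal{D})]_{\sigma-1}$ with the (possibly constrained) Sobolev space of order $\sigma\neq 3/2$ using Grisvard-type interpolation results for spaces cut out by a boundary operator. The interpolation and norm-equivalence steps match the paper's argument, which uses \citet[Theorem 8.1]{grisvardinterpolation} together with a reiteration identity.

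The one place where you are too quick is the anchor at $\sigma=2$. You write that Remark~\ref{rem:H2regul} gives $\dot H_L^2=\mathscr{D}(L)=H^2_{\mathcal{N}}(\mathcal{D})$, but the remark only yields the inclusion $\dot H_L^2(\mathcal{D})\subseteq H^2_{\mathcal{N}}(\mathcal{D})$: every element of the spectral domain is $L^{-1}f$ for some $f\in L_2(\mathcal{D})$, and elliptic regularity places it in $H^2_{\mathcal{N}}(\mathcal{D})$. The reverse inclusion --- that every $\psi\in H^2_{\mathcal{N}}(\mathcal{D})$ lies in the spectral domain $\dot H_L^2(\mathcal{D})$ --- does not follow from the remark and is where the paper does real work: since $\boldsymbol{H}$ is only Lipschitz, one must first show $\boldsymbol{H}\nabla\psi\in (H^1(\mathcal{D}))^d$ (the paper uses a Kirszbraun extension of $\boldsymbol{H}$, a cutoff, and \citet[Theorem 1.4.1.1]{grisvard}) so that $L\psi\in L_2(\mathcal{D})$, and then apply the Gauss--Green formula twice, using the Neumann conditions on both $\psi$ and the eigenfunctions, to verify that the eigenfunction coefficients satisfy $b_i=\lambda_i a_i$ and hence $\sum_i\lambda_i^2 a_i^2<\infty$. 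Without this step the congruence \eqref{eq:hdotneumann2}, and therefore the whole $1<\sigma\leq 2$ range, is not established. The rest of your outline is sound and the gap is fillable exactly along these lines.
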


\begin{proof}
	First, observe that $\dot{H}_L^0 = L_2(\mathcal{D})$. Also, since the bilinear form $a_L$ is continuous, coercive and symmetric, $a_L$ is an inner product on $H^1(\mathcal{D})$, whose corresponding norm is equivalent to $\|\cdot\|_{H^1(\mathcal{D})}$. Now, by definition of $\|\cdot\|_{\dot{H}_L^1(\mathcal{D})}$, we have that for every $\phi\in H^1(\mathcal{D})$, $a_L(\phi,\phi) = \|\phi\|_{\dot{H}_L^1(\mathcal{D})}^2$. This means that the norm induced by $a_L$ coincides with the norm $\|\cdot\|_{\dot{H}_L^1(\mathcal{D})}$. This shows the equivalence between $\|\cdot\|_{\dot{H}_L^1(\mathcal{D})}$ and $\|\cdot\|_{{H}^1(\mathcal{D})}$.
	
	Now, observe that from Lax-Milgram's lemma, for every $i\in\mathbb{N}$, the eigenvector $e_i$ of $L$ belongs to $H^1(\mathcal{D})$ and satisfies $(e_i,e_j)_{\dot{H}_L^1(\mathcal{D})} = a_L(e_i,e_j) = \lambda_i \delta_{i,j},$
	where $\delta_{i,j}$ is the Kronecker's delta. For any $\phi \in \dot{H}^1_L(\mathcal{D})$, we have that $\phi = \sum_{i\in\mathbb{N}} a_i e_i$, with $\sum_{i\in\mathbb{N}} a_i^2 \lambda_i <\infty$ and ${a_i = (\phi,e_i)_{L_2(\mathcal{D})}}$. Since  $\sum_{i\in\mathbb{N}} a_i^2 \lambda_i <\infty$, the series $\sum_{i\in\mathbb{N}} a_i e_i$ is absolutely convergent in $H^1(\mathcal{D})$, which implies that it also converges in $H^1(\mathcal{D})$ since $H^1(\mathcal{D})$ is a Hilbert (complete) space. On the other hand, the series converges to $\phi$ in $L_2(\mathcal{D})$, so the series must converge to $\phi$ in $H^1(\mathcal{D})$ as well because of the inclusion $H^1(\mathcal{D}) \subset L_2(\mathcal{D})$.  Thus, $\phi\in H^1(\mathcal{D})$. 
	
	Conversely, let $\psi\in H^1(\mathcal{D})$ and observe that $\{e_i/\sqrt{\lambda_i}\}_{i\in\mathbb{N}}$ is a complete orthonormal set in $(H^1(\mathcal{D}), \|\cdot\|_{\dot{H}^1_L(\mathcal{D})})$. Therefore, we have that
	$\psi = \sum_{i\in\mathbb{N}} b_i \frac{e_i}{\sqrt{\lambda_i}},$
	where the coefficients are ${b_i = (\psi, e_i/\sqrt{\lambda_i})_{\dot{H}^1_L(\mathcal{D})} = a_L(\psi, e_i/\sqrt{\lambda_i})}$. By Parseval's identity,
	$\|\psi\|_{\dot{H}^1_L(\mathcal{D})}^2 = \sum_{i\in\mathbb{N}} b_i^2.$
	By equivalence of $\|\cdot\|_{\dot{H}_L^1(\mathcal{D})}$ and $\|\cdot\|_{{H}^1(\mathcal{D})}$, we have that there exists $C>0$ such that $\|\psi\|_{\dot{H}^1_L(\mathcal{D})} \leq C \|\psi\|_{H^1(\mathcal{D})}$. Thus, since $\psi\in H^1(\mathcal{D})$, we have that $\|\psi\|_{H^1(\mathcal{D})} <\infty$, which in turn implies that $\sum_{i\in\mathbb{N}} b_i^2 <\infty$. On the other hand, $\{e_i\}_{i\in\mathbb{N}}$ is a complete orthonormal set in $L_2(\mathcal{D})$, so $\psi = \sum_{i\in\mathbb{N}} a_i e_i$, with $a_i = (\psi, e_i)_{L_2(\mathcal{D})}$. Therefore, $b_i = \sqrt{\lambda_i} a_i$, which yields,
	$\sum_{i\in\mathbb{N}} \lambda_i a_i^2 = \sum_{i\in\mathbb{N}} b_i^2 <\infty$, thus $\psi \in \dot{H}^1_L(\mathcal{D}).$
	Hence $(\dot{H}^1_L(\mathcal{D}), \|\cdot\|_{\dot{H}^1_L(\mathcal{D})})\cong (H^1(\mathcal{D}), \|\cdot\|_{H^1(\mathcal{D})})$. 
	
	We obtain \eqref{eq:hdotneuman1} by the same arguments as in the proof of \citet[Corollary 10]{bolinsimaswallin}.
	Similarly, to prove \eqref{eq:hdotneumann2}, it is enough to show that $(\dot{H}^2_L(\mathcal{D}), \|\cdot\|_{\dot{H}^2_L(\mathcal{D})})\cong (H_\mathcal{N}^2(\mathcal{D}), \|\cdot\|_{H^2(\mathcal{D})})$. 
	To this end, first, let $\phi \in \dot{H}^2_L(\mathcal{D})$ and 
	write $\phi = \sum_{i\in\mathbb{N}} a_i e_i$, with $a_i = 
	(\phi, e_i)_{L_2(\mathcal{D})}$. Let, $\phi_N= \sum_{i=1}^N a_i e_i$ 
	and by linearity of $L$, we have that $L \phi_N = \sum_{i=1}^N a_i 
	\lambda_i e_i$. Now, observe that $\sum_{i\in\mathbb{N}} \lambda_i^2 
	a_i^2 <\infty$ implies $L \phi_N$ converges to some $g\in L_2(\mathcal{D})$. 
	On the other hand, since $L:\dot{H}^2_L(\mathcal{D}) \to L_2(\mathcal{D})$ 
	is self-adjoint, it is a closed operator. Therefore, $L\phi = g$. We now 
	apply $H^2(\mathcal{D})$-regularity of $L$ (Remark \ref{rem:H2regul}) to 
	conclude that $\phi \in H_{\mathcal{N}}^2(\mathcal{D})$. Finally, it follows 
	from the closed graph theorem that $(\dot{H}_L^2(\mathcal{D}), 
	\|\cdot\|_{\dot{H}_L^2(\mathcal{D})}) \hookrightarrow 
	(H^2_\mathcal{N}(\mathcal{D}) , \|\cdot\|_{H^2(\mathcal{D})}).$ Indeed, 
	first observe that $\lambda_j\to \infty$ as $j\to\infty$. This yields 
	$(\dot{H}_L^2(\mathcal{D}), \|\cdot\|_{\dot{H}_L^2(\mathcal{D})}) 
	\hookrightarrow (L_2(\mathcal{D}) , \|\cdot\|_{L_2(\mathcal{D})}).$ 
	Now, let $\phi_N \to 0$ in 
	$\dot{H}^2_L(\mathcal{D})$, then $\phi_N\to 0$ in $L_2(\mathcal{D})$. 
	On the other hand if $I_{\dot{H}^2_L(\mathcal{D}), H^2_{\mathcal{N}}(\mathcal{D})}(\phi_N) \to \phi$, then 
	$\|\phi_N - \phi\|_{L_2(\mathcal{D})}\leq \|\phi_N-\phi\|_{H^2(\mathcal{D})} 
	\to 0$. So, $\phi = 0$, since $\phi_N \to 0$ in $L_2(\mathcal{D})$. 
	By the closed graph theorem $I_{\dot{H}^2_L(\mathcal{D}), H^2_{\mathcal{N}}(\mathcal{D})}$
	is a bounded operator.
	
	Conversely, let $\psi \in H^2_\mathcal{N}(\mathcal{D})$. By the Kirszbraun theorem \citep{kirszbraun}, $\boldsymbol{H}$ can be extended to a Lipschitz function on $\mathbb{R}^d$ with the same Lipschitz constant. Denote this extension by $\widetilde{\boldsymbol{H}}$. Now, let $R>0$ be such that $\overline{\mathcal{D}} \subset B(0,R)$, where $B(0,R)$ stands for the ball with center $0$ and radius $R$ in $\mathbb{R}^d$. Since $\widetilde{\boldsymbol{H}}$ is uniformly continuous, it is bounded in $B(0,2R)$. Let, also, $\varphi \in C^\infty_c(B(0,2R))$, such that $\varphi\equiv 1$ in $B(0,R)$. Then, by convexity of $B(0,2R)$, $\varphi$ is Lipschitz and bounded. This implies that $\varphi \widetilde{\boldsymbol{H}}$ is Lipschitz, since it is the product of bounded Lipschitz functions, $\varphi \widetilde{\boldsymbol{H}}\in C_c(\mathbb{R}^d)$, and the restriction of $\varphi\widetilde{\boldsymbol{H}}$ to $\mathcal{D}$ is $\boldsymbol{H}$. Therefore, by \citet[Theorem 1.4.1.1]{grisvard}, $\boldsymbol{H}\nabla \psi \in (H^1(\mathcal{D}))^d$. In particular,
	$L\phi \in L_2(\mathcal{D})$. Thus,
	$L\phi = \sum_{i\in\mathbb{N}} b_i e_i, \quad \sum_{i\in\mathbb{N}} b_i^2 <\infty,$
	where $b_i = (L\phi, e_i)_{L_2(\mathcal{D})}$. We then apply Gauss-Green formula \citep[Theorem 1.5.3.1]{grisvard} twice together with the fact that $\phi$ and $e_i$ satisfy Neumann boundary condition, to conclude that
	$$b_i = (L\phi,e_i)_{L_2(\mathcal{D})} = (\phi,Le_i)_{L_2(\mathcal{D})} = \lambda_i (\phi,e_i)_{L_2(\mathcal{D})}.$$
	Now, if we write $\phi = \sum_{i\in\mathbb{N}} a_i e_i$, we obtain that $b_i = \lambda_i a_i$. Therefore, we have that 
	$ \sum_{i\in\mathbb{N}} \lambda_i^2 a_i^2 =\sum_{i\in\mathbb{N}} b_i^2 < \infty.$
	Hence, $\phi \in \dot{H}_L^2(\mathcal{D})$. Now, we repeat the same argument from the previous inclusion, to obtain that $(H^2_\mathcal{N}(\mathcal{D}) , \|\cdot\|_{H^2(\mathcal{D})}) \hookrightarrow (\dot{H}_L^2(\mathcal{D}), \|\cdot\|_{\dot{H}_L^2(\mathcal{D})})$ from the closed graph theorem. This proves \eqref{eq:hdotneumann2}.
	
	Note that $H^2_\mathcal{N}(\mathcal{D}) \hookrightarrow H^2(\mathcal{D})$. So, by combining \eqref{eq:hdotneumann2} with a similar argument to the one in the proof of \cite[Corollary 10]{bolinsimaswallin}, we obtain \eqref{eq:hdotneumann3}.
	Finally, observe that since $\mathcal{D}$ is Lipschitz, we have, by \citet[Theorem 8.1]{grisvardinterpolation}, that $[L_2(\mathcal{D}), H^2_\mathcal{N}(\mathcal{D})]_{1/2} \cong H^1(\mathcal{D})$. This identification together with \cite[Theorem 2.2, item (vii)]{chandlerwildeetal} imply the further identification $[H^1(\mathcal{D}), H^2_\mathcal{N}(\mathcal{D})]_{\gamma} \cong [L_2(\mathcal{D}), H_\mathcal{N}^2(\mathcal{D})]_{\frac{1+\gamma}{2}}$, $0<\gamma<1$. The equivalence of the norms $\|\cdot\|_{\dot{H}_L^\sigma(\mathcal{D})}$ and $\|\cdot\|_{H^\sigma(\mathcal{D})}$ for $\sigma\neq 3/2$ now follows from \eqref{eq:hdotneumann2}, the identification $[H^1(\mathcal{D}), H^2_\mathcal{N}(\mathcal{D})]_{\gamma} \cong [L_2(\mathcal{D}), H_\mathcal{N}^2(\mathcal{D})]_{\frac{1+\gamma}{2}}$, $0<\gamma<1$, and another application of \citet[Theorem 8.1]{grisvardinterpolation}.
\end{proof}

We are now in a position to obtain a version of Theorem 1 of \cite{cox_Kirchner} (more precisely, of Remark 8 in \cite{cox_Kirchner}) that works for both Dirichlet and Neumann boundary conditions.

\begin{remark}\label{rem:interpolation}
	From Assumptions \ref{assumpDomain} and \ref{assumpFE}, there exists a linear operator $\mathcal{I}_h:H^2(\mathcal{D})\to V_h$ such that for every $1\leq \theta <2$, $\mathcal{I}_h:H^\theta(\mathcal{D})\to V_h$ is a continuous extension and there exists a constant $C$ which only depends on $\kappa, \boldsymbol{H}$ and $\mathcal{D}$ such that
	
	$$\|I_{H^\theta(\mathcal{D}),L_2(\mathcal{D})} - \mathcal{I}_h\|_{\mathcal{L}(H^\theta(\mathcal{D}), L_2(\mathcal{D}))} \lesssim_{\kappa, \boldsymbol{H}, \mathcal{D}} h^\theta,$$
	where $1\leq\theta\leq  2$ 
	Indeed, this follows by \citet[Theorem~3.2.1]{ciarletfinitelement} together with \citet[Theorem 3.5]{chandlerwildeetal}.
\end{remark}

\begin{lemma}\label{boundnorm}
	Under Assumption \ref{assumpDomain},\ref{assumpOperator} and \ref{assumpFE}, we have that for every $\tau>0$
	$$\|L^{-\tau} - L_h^{-\tau}\Pi_h\|_{\mathcal{L}(H^{\gamma}(\mathcal{D}), L_2(\mathcal{D}))} \lesssim_{\varepsilon,\tau,\gamma, \kappa, \boldsymbol{H}, \mathcal{D}} h^{\min\{2\tau + \gamma -\varepsilon, 2\}},$$
	where $\Pi_h:L_2(\mathcal{D}) \to V_h$ the $L_2(\mathcal{D})$-orthogonal projection onto $V_h$,
	$0 \leq \gamma \leq 2$, $\gamma\neq 1/2$ for the Dirichlet case, or $\gamma\neq 3/2$ for the Neumann case, $\varepsilon >0$ is arbitrary and $h>0$ is sufficiently small.
\end{lemma}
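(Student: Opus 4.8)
The plan is to express the fractional power through an integral of resolvents and to reduce the estimate to the non-fractional finite element theory for the shifted operator $\mu I + L$. Writing $\tau = n + \sigma$ with $n = \lfloor\tau\rfloor$ and $\sigma = \{\tau\} \in [0,1)$, I would use the Balakrishnan representation in the form
\begin{equation*}
L^{-\tau} = \frac{\sin(\pi\sigma)}{\pi}\int_0^\infty \mu^{-\sigma}\, L^{-n}(\mu I + L)^{-1}\,d\mu, \qquad 0<\sigma<1,
\end{equation*}
together with the identical representation for $L_h^{-\tau}\Pi_h$ (using that $\Pi_h$ is the identity on $V_h$, so that $L_h^{-n}(\mu I+L_h)^{-1}\Pi_h$ appears under the integral). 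Subtracting gives
\begin{equation*}
L^{-\tau} - L_h^{-\tau}\Pi_h = \frac{\sin(\pi\sigma)}{\pi}\int_0^\infty \mu^{-\sigma}\bigl[L^{-n}(\mu I + L)^{-1} - L_h^{-n}(\mu I + L_h)^{-1}\Pi_h\bigr]\,d\mu,
\end{equation*}
so the whole problem reduces to a $\mu$-uniform estimate of the resolvent-type error under the integral, which is then integrated against the kernel $\mu^{-\sigma}$. Note that the positivity $\lambda_1>0$ (guaranteed in the Neumann case by $\essinf_{x}\kappa(x)\geq\kappa_0>0$) is what makes the representation and the resolvent bounds below valid. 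When $\sigma=0$, i.e. $\tau$ is an integer, $\sin(\pi\sigma)=0$ and the integral term disappears, so the estimate is purely the non-fractional finite element error.

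Second, I would establish the basic building block: for each fixed $\mu\geq 0$,
\begin{equation*}
\bigl\|(\mu I + L)^{-1} - (\mu I + L_h)^{-1}\Pi_h\bigr\|_{\mathcal{L}(H^\gamma(\mathcal{D}), L_2(\mathcal{D}))} \lesssim \min\bigl\{h^{\min\{2+\gamma-\varepsilon,\,2\}},\ (1+\mu)^{-1}\bigr\},
\end{equation*}
schematically up to a $\mu$-dependent weight, where the first bound comes from a shifted Aubin--Nitsche duality argument and the second from the uniform coercivity of $a_L + \mu(\cdot,\cdot)$. Concretely, the shifted bilinear form is coercive with constant growing like $1+\mu$, C\'ea's lemma together with the interpolation error of Remark~\ref{rem:interpolation} controls the energy-norm error, and the duality argument upgrades it to $L_2$; the data regularity enters through the identifications $\dot{H}_L^\gamma \cong H^\gamma$ (Proposition~\ref{prp:hdotneumann} in the Neumann case, \citet[Lemma~2]{cox_Kirchner} in the Dirichlet case), which convert $\|f\|_{H^\gamma}$ into the spectral norm $\|f\|_{\dot{H}_L^\gamma}$ and thereby yield solution regularity in $\dot{H}_L^{2+\gamma}$, capped at $\dot{H}_L^{2}\cong H^2$ by the $H^2$-regularity of $L$ (Remark~\ref{rem:H2regul}). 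The composite $L^{-n}(\mu I+L)^{-1}$ is then handled by a further telescoping over its $n+1$ factors, each factor-replacement contributing one such resolvent/solution-operator error while the untouched factors supply boundedness and $\mu$-decay.

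Third, I would insert these bounds into the integral and balance the two regimes by splitting $\int_0^\infty$ at a threshold $\mu_0\sim h^{-\alpha}$ chosen to equate the small-$\mu$ (finite-element–dominated) and large-$\mu$ (resolvent-decay–dominated) contributions. Integrating $\mu^{-\sigma}$ against the small-$\mu$ bound and against the tail $(1+\mu)^{-1}$ trades the $h$-power for the spectral decay and yields the advertised exponent $\min\{2\tau+\gamma,\,2\}$; the arbitrarily small loss $\varepsilon$ and the exclusions $\gamma\neq 1/2$ (Dirichlet) and $\gamma\neq 3/2$ (Neumann) appear precisely at the thresholds where the norm equivalences $\dot{H}_L^\gamma\cong H^\gamma$ degenerate and a logarithmic factor must be absorbed.

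I expect the main obstacle to be the $\mu$-uniform resolvent estimate in the nonstandard norm $\mathcal{L}(H^\gamma,L_2)$ and the ensuing integral balancing, rather than the representation itself: tracking the $\mu$-dependence through the shifted duality argument (where the coercivity, regularity and interpolation constants all depend on $\mu$) must be done carefully enough that $\mu^{-\sigma}\times(\text{error})$ is integrable at both endpoints and reproduces exactly $h^{2\tau+\gamma}$ up to $\varepsilon$. The Neumann case is the genuinely new ingredient: every step that in \citet{cox_Kirchner} relied on the Dirichlet norm identification must instead invoke Proposition~\ref{prp:hdotneumann}, for which the convexity of $\mathcal{D}$ and the lower bound on $\kappa$ are essential.
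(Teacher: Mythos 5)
Your route is genuinely different from the paper's. The paper does not re-derive the fractional FEM error estimate at all: its proof of Lemma~\ref{boundnorm} consists of verifying the hypotheses of \citet[Theorem~1]{cox_Kirchner} --- the quasi-interpolant bound of Remark~\ref{rem:interpolation} and the norm identifications $\dot{H}_L^\gamma \cong H^\gamma$ --- and then observing that the same proof carries over to Neumann boundary conditions once Proposition~\ref{prp:hdotneumann} supplies the Neumann analogue of \citet[Lemma~2]{cox_Kirchner}. All of the genuinely new work is thus concentrated in Proposition~\ref{prp:hdotneumann}; the fractional analysis itself is a black box. You instead sketch a self-contained derivation via the Balakrishnan representation, $\mu$-uniform resolvent error estimates, and an integral splitting --- essentially the strategy of \citet{bonito_approximation} and of the proof inside the cited theorem. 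What your approach buys is transparency about where each hypothesis enters (the $H^2$-regularity of Remark~\ref{rem:H2regul}, the spectral gap $\lambda_1>0$, the interpolation thresholds $\gamma\neq 1/2$ and $\gamma\neq 3/2$); what the paper's approach buys is brevity and a clean separation of the only new ingredient, the Neumann norm equivalence. Your identification of that ingredient as the essential novelty is exactly right.

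One caveat: your building-block bound, read literally as $\min\{h^{\min\{2+\gamma-\varepsilon,2\}},(1+\mu)^{-1}\}$, does not suffice. Since piecewise linear elements cap the single-resolvent rate at $h^{2}$ for every $\gamma\in[0,2]$, balancing that bound against the tail $(1+\mu)^{-1}$ under the kernel $\mu^{-\sigma}$ yields only $h^{2\tau}$, not $h^{2\tau+\gamma}$. The data smoothness must enter as a $\mu$-dependent gain in the resolvent error --- roughly, $\|(\mu I+L)^{-1}-(\mu I+L_h)^{-1}\Pi_h\|_{\mathcal{L}(\dot{H}_L^\gamma(\mathcal{D}),L_2(\mathcal{D}))}\lesssim h^{2}(1+\mu)^{-\gamma/2}$ for large $\mu$, obtained by interpolating between the $\mathcal{L}(L_2(\mathcal{D}))$ and $\mathcal{L}(\dot{H}_L^2(\mathcal{D}),L_2(\mathcal{D}))$ estimates --- and it is precisely this weighted bound, not the unweighted one, that the splitting converts into the exponent $\min\{2\tau+\gamma,2\}$. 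You flag this as the main obstacle and hedge with ``up to a $\mu$-dependent weight,'' so the gap is acknowledged rather than hidden, but it is the crux of the argument and would need to be made precise for the proof to close.
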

\begin{proof}
	For the Dirichlet case, Assumptions \ref{assumpDomain}, \ref{assumpOperator} and \ref{assumpFE} from Section \ref{sec:continuous_model} together with \eqref{congruence} and Remark \ref{rem:interpolation} imply the required assumptions for Theorem 1 of \cite{cox_Kirchner}. The case when $\gamma = 0$ follows  by choosing $\tau = \beta, \alpha=1$ and $\sigma=\delta=0$ whereas the case when $0<\gamma \leq 2$, $\gamma\neq 1/2$, follows from choosing $\tau = \beta, \alpha=1, \delta = \gamma$ and $\sigma=0.$
	
	For the Neumann case, Assumptions \ref{assumpDomain}, \ref{assumpOperator} and \ref{assumpFE}, together with Remark \ref{rem:interpolation} and Proposition \ref{prp:hdotneumann} allows us to use the same proof of \citet[Theorem 1]{cox_Kirchner} to obtain the desired result, where we take $\tau = \beta, \alpha=1$ and $\sigma=\delta=0$ when $\gamma=0$, or $\tau = \beta, \alpha=1, \delta = \gamma$ and $\sigma=0$, when $0<\gamma \leq 2$, $\gamma\neq 3/2$.
\end{proof}

We define
$$\varrho_h^\beta(x,y) = \sum_{j=1}^{n_h} \lambda_{j,h}^{-2\beta} e_{j,h}(x) e_{j,h}(y),\quad\hbox{for a.e. $(x,y)\in\mathcal{D}$}.$$
Then,
$\|L_h^{-2\beta}\Pi_h\|_{\mathcal{L}_2(L_2(\mathcal{D}))} = \|L_h^{-2\beta}\|_{\mathcal{L}_2(V_h)}  = \|\varrho_h^\beta\|_{L_2(\mathcal{D}\times\mathcal{D})}.$
\begin{remark}
	Note that $\varrho_h^\beta$ is the covariance function of the stochastic process obtained as the solution of \eqref{galerkinspde}.
\end{remark}
Now we are ready to give the proof of Proposition \ref{cov_fem_approx_rate}.
\begin{proof}[Proof of Proposition \ref{cov_fem_approx_rate}]
	Observe that $L^{-2\beta} - L_h^{-2\beta}\Pi_h$ is a kernel operator with kernel $\varrho^\beta - \varrho_h^\beta$. Thus,
	$\|\varrho^\beta - \varrho_h^\beta\|_{L_2(\mathcal{D}\times\mathcal{D})} = \|L^{-2\beta} - L_h^{-2\beta}\Pi_h\|_{\mathcal{L}_2(L_2(\mathcal{D}))}.$
	Therefore, it is enough to obtain a bound for $\|L^{-2\beta} - L_h^{-2\beta}\Pi_h\|_{\mathcal{L}_2(L_2(\mathcal{D}))}$. Fix any $\varepsilon>0$. Now, let $0<\delta < \min\{\beta -d/4, \varepsilon/4\}$. Then, we have that
	
	\begin{align}\label{proof1}
		\|L^{-2\beta} - L_h^{-2\beta}\Pi_h\|_{\mathcal{L}_2(L_2(\mathcal{D}))} \leq& \left\|\left(L^{-(2\beta -d/4 -\delta)} - L_h^{-(2\beta -d/4-\delta)}\Pi_h\right)L_h^{-(d/4+\delta)}\Pi_h\right\|_{\mathcal{L}_2(L_2(\mathcal{D}))} \nonumber\\
		&+ \left\|L^{-(2\beta  -d/4 -\delta)}\left(L_h^{-(d/4+\delta)}\Pi_h - L^{-(d/4+\delta)}\right)\right\|_{\mathcal{L}_2(L_2(\mathcal{D}))}.
	\end{align}

	We begin by handling the term first term in the right-hand side of \eqref{proof1}.
	Recall that if $H$ is a Hilbert space and $A,B:H\to H$ are linear operators, then
	\begin{equation}\label{HS-H}
		\|AB\|_{\mathcal{L}_2(H)}\leq \|A\|_{\mathcal{L}(H)}\|B\|_{\mathcal{L}_2(H)}.
	\end{equation}
	Now, let $\tau = 2\beta -d/4-\delta > 0$ and apply Lemma \ref{boundnorm} (where we take the $\varepsilon$ in its statement as $\varepsilon/2$ and $\gamma = 0$) together with equation \eqref{HS-H} to obtain
	\begin{align*}
		\left\|\left(L^{-\tau} - L_h^{-\tau}\Pi_h\right)L_h^{-(d/4+\delta)}\Pi_h\right\|_{\mathcal{L}_2(L_2(\mathcal{D}))} &\leq \|L^{-\tau} - L_h^{-\tau}\Pi_h\|_{\mathcal{L}(L_2(\mathcal{D}))} \|L_h^{-(d/4+\delta)}\Pi_h\|_{\mathcal{L}_2(L_2(\mathcal{D}))}  \\
		&\lesssim_{\varepsilon, \beta, \kappa, \boldsymbol{H}, \mathcal{D}} h^{\min\{4\beta -d/2 -2\delta -\varepsilon/2, 2\}} \|L_h^{-(d/4+\delta)}\Pi_h\|_{\mathcal{L}_2(L_2(\mathcal{D}))}.
	\end{align*}
	
	Let $\zeta(s) = \sum_{j=1}^\infty j^{-s}$ and $\theta = d/4+\delta$. We have the following bound for the Hilbert-Schmidt norm of $L^{-(d/4+\delta)}_h \Pi_h$:
	$$\|L_h^{-\theta} \Pi_h\|_{\mathcal{L}_2(L_2(\mathcal{D}))}^2 =  \sum_{j=1}^{n_h}  \lambda_{j,h}^{-2\theta} \leq \sum_{j=1}^{n_h} \lambda_j^{-2\theta} \lesssim_{\kappa, \boldsymbol{H}, \mathcal{D}} \sum_{j=1}^{n_h} j^{-4\theta/d} < \zeta(4\theta/d) <\infty,$$
	where we used item 2 of Proposition \ref{eigenval} and Weyl’s law (Remark \ref{rem:WeylsLaw}).
	Therefore, since $2\delta < \varepsilon/2$ and $h$ is sufficiently small, we obtain
	\begin{equation}\label{bound1}
		\left\|\left(L^{-(2\beta -d/4 -\delta)} - L_h^{-(2\beta -d/4-\delta)}\Pi_h\right)L_h^{-(d/4+\delta)}\Pi_h\right\|_{\mathcal{L}_2(L_2(\mathcal{D}))} \lesssim_{\varepsilon, \beta,\kappa, \boldsymbol{H}, \mathcal{D}} h^{\min\{4\beta-d/2-\varepsilon, 2\}}.
	\end{equation}
	
	Now let us give a bound for the second term on the right-hand side of 
	\eqref{proof1}. Let $\gamma = \min\{4\beta-d-4\delta, 2\}>0$, so 
	$\gamma\leq 2$. Observe that in order to apply Lemma \ref{boundnorm}, 
	we must choose $\delta$ such that  $\gamma \neq 1/2$ in the Dirichlet 
	case, or $\gamma\neq 3/2$ in the Neumann case. This is possible, since 
	we can reduce $\delta$ if necessary.
	
	The natural domain of the operator $L^{-(2\beta  -d/4 -\delta)}$ is 
	$ L_2(\mathcal{D})$. Furthermore, by the definition of the 
	$\dot{H}^\sigma_L(\mathcal{D})$ space, we have that $L^{-(2\beta  -d/4 
		-\delta)}: L_2(\mathcal{D}) \to \dot{H}_L^\gamma(\mathcal{D})$ since for 
	every $v \in L_2(\mathcal{D})$, $L^{-(2\beta  -d/4 -\delta)}v \in 
	\dot{H}_L^{4\beta-d/2-2\delta}(\mathcal{D})\subset \dot{H}_L^\gamma
	(\mathcal{D})$. If we restrict the domain of $L_h^{-\theta} \Pi_h$ to 
	$\dot{H}_L^\gamma(\mathcal{D})$, we also have that $L_h^{-\theta} \Pi_h: 
	\dot{H}_L^\gamma(\mathcal{D}) \to L_2(\mathcal{D})$. Let $A=L_h^{-\theta} 
	\Pi_h-L^{-\theta}$ and $B=L^{-(2\beta  -d/4 -\delta)}$. 
	Observe that
	\begin{equation*}\label{norm_ineq}
		\left\|BA\right\|_{\mathcal{L}_2(L_2(\mathcal{D}))} \leq 	\left\|B\right\|_{\mathcal{L}_2(L_2(\mathcal{D}),\dot{H}_L^\gamma(\mathcal{D}))}
		\left\|A\right\|_{\mathcal{L}(\dot{H}_L^\gamma(\mathcal{D}),L_2(\mathcal{D}))}.
	\end{equation*}
	
	Let us now show that $\left\|B\right\|_{\mathcal{L}_2(L_2(\mathcal{D}),\dot{H}_L^\gamma(\mathcal{D}))}$ is bounded. Recall that $\{e_j\}_{j\in\mathbb{N}}$ is an orthonormal basis in $L_2(\mathcal{D})$. Then we have
	\begin{align*}
		\left\|L^{-(2\beta  -d/4 -\delta)}\right\|_{\mathcal{L}_2(L^2(\mathcal{D}),\dot{H}_L^\gamma(\mathcal{D}))}^2
		&=
		\sum_{j=1}^\infty \left\|L^{-(2\beta  -d/4 -\delta)}e_j\right\|_{\dot{H}_L^\gamma(\mathcal{D})}^2
		=
		\sum_{j=1}^\infty \left\|L^{\gamma/2} L^{-(2\beta  -d/4 -\delta)}e_j\right\|_{L_2(\mathcal{D})}^2\\
		&=
		\sum_{j=1}^\infty \lambda_j^{\gamma-4\beta + d/2 + 2\delta}
		\lesssim_{\kappa, \boldsymbol{H},\mathcal{D}}
		\sum_{j=1}^\infty j^{2\gamma/d-8\beta/d+4\delta/d+1},
	\end{align*}
	which converges since $2\gamma/d-8\beta/d+4\delta/d+1< -1$, and where the last inequality comes from $\gamma-2\delta<\gamma\leq 4\beta-d-4\delta$.
	
	Now let us handle the term $ \left\|A\right\|_{\mathcal{L}(\dot{H}_L^\gamma(\mathcal{D}),L_2(\mathcal{D}))}$. By (16) and (17) from Lemma 2 in \cite{cox_Kirchner} for the Dirichlet case, or by Proposition \ref{prp:hdotneumann} for the Neumann case, we can conclude that $\dot{H}_L^\gamma(\mathcal{D}) \subset H^\gamma(\mathcal{D})$ and $\left\|\cdot\right\|_{\dot{H}_L^\gamma(\mathcal{D})}$ is equivalent to $\left\|\cdot\right\|_{H^\gamma(\mathcal{D})}$ when $0 \leq \gamma \leq 2$ and $\gamma \neq 1/2$ for the Dirichlet case or $\gamma\neq 3/2$ for the Neumann case. By equivalency of the two norms, there exists a constant $C$ such that $\left\|v\right\|_{H^\gamma(\mathcal{D})} \leq C\cdot \left\|v\right\|_{\dot{H}_L^\gamma(\mathcal{D})}$, which implies $1/\left\|v\right\|_{\dot{H}_L^\gamma(\mathcal{D})} \leq C/\left\|v\right\|_{H^\gamma(\mathcal{D})}$, for every $v \in \dot{H}_L^\gamma(\mathcal{D})$. Then by $\dot{H}_L^\gamma(\mathcal{D}) \subset H^\gamma(\mathcal{D})$, we can conclude that $ \left\|A\right\|_{\mathcal{L}(\dot{H}_L^\gamma(\mathcal{D}),L_2(\mathcal{D}))} \leq C \cdot \left\|A\right\|_{\mathcal{L}(H^\gamma(\mathcal{D}),L_2(\mathcal{D}))}$. Combining this with Lemma~\ref{boundnorm}, we obtain that 
	\begin{align*}
		\left\| L_h^{-\theta} \Pi_h-L^{-\theta} \right\|_{\mathcal{L}(\dot{H}_L^\gamma(\mathcal{D}),L_2(\mathcal{D}))}
		&\lesssim_{\varepsilon,\theta,\gamma, \kappa, \boldsymbol{H}, \mathcal{D}}
		h^{\min\{2\theta + \gamma - \varepsilon/2,2\}}
		=
		h^{\min\{4\beta-d/2-2\delta-\varepsilon/2,2\}},
	\end{align*}
	where we chose $\varepsilon$ in the statement of Lemma  \ref{boundnorm} as $\varepsilon/2$. Again, since $2\delta < \varepsilon/2$ and $h$ is sufficiently small, we arrive at
	\begin{equation}\label{bound2}
		\left\| L_h^{-\theta} \Pi_h-L^{-\theta} \right\|_{\mathcal{L}(\dot{H}_L^\gamma(\mathcal{D}),L_2(\mathcal{D}))}
		\lesssim_{\varepsilon, \beta,\kappa, \boldsymbol{H}, \mathcal{D}} h^{\min\{4\beta-d/2-\varepsilon, 2\}}.
	\end{equation}
	
	The result now follows from \eqref{bound1} and \eqref{bound2}.
\end{proof}

\begin{proof}[Proof of Proposition \ref{ra_bound}]
	First, note that 
	$\|\varrho_{h,m}^\beta - \varrho^\beta\|_{L_2(\mathcal{D}\times\mathcal{D})} = \|L^{-2\beta} - L_{h,m}^{-2\beta}\Pi_h\|_{\mathcal{L}_2(L_2(\mathcal{D}))}$, and we similarly have that 
	$\|\varrho_{h,m}^\beta - \varrho_h^\beta\|_{L_2(\mathcal{D}\times\mathcal{D})} = \|L^{-2\beta}_{h,m}\Pi_h - L_{h}^{-2\beta}\Pi_h\|_{\mathcal{L}_2(L_2(\mathcal{D}))}$ 
	and also 
	$\|\varrho^\beta - \varrho_h^\beta\|_{L_2(\mathcal{D}\times\mathcal{D})} = \|L^{-2\beta} - L_h^{-2\beta}\Pi_h\|_{\mathcal{L}_2(L_2(\mathcal{D}))}$. 
	Therefore, by the triangle inequality, 
	$$\|\varrho_{h,m}^\beta - \varrho^\beta\|_{L_2(\mathcal{D}\times\mathcal{D})} \leq \|L_h^{-2\beta}\Pi_h - L_{h,m}^{-2\beta}\Pi_h\|_{\mathcal{L}_2(L_2(\mathcal{D}))} + \|L^{-2\beta} - L_h^{-2\beta}\Pi_h\|_{\mathcal{L}_2(L_2(\mathcal{D}))}.$$
	
	We begin by obtaining an upper bound for $\|L_h^{-2\beta}\Pi_h - L_{h,m}^{-2\beta}\Pi_h\|_{\mathcal{L}_2(L_2(\mathcal{D}))}$. Recall from Section \ref{finite_element}, that the eigenvalues of $L_h$ are $0<\lambda_{1,h} \leq \lambda_{2,h} \leq \cdots\leq \lambda_{n_h, h}$, 
	with corresponding eigenvectors $\{e_{j,h}\}_{j=1}^{n_h}$, which are orthonormal in $L_2(\mathcal{D})$. By item 2 of Proposition \ref{eigenval}, we have that $J_h \subset J$, where $J_h = [\lambda_{n_h,h}^{-1}, \lambda_{1,h}^{-1}]$ and $J = [0,\lambda_1^{-1}]$, since $\lambda_1$ is the smallest eigenvalue of $L$. We normalize $L$ so that $\lambda_1 \geq 1 $. Thus, $J_h \subset J \subset [0,1].$ Now, let $f(x) = x^{2\beta}$ and $\hat{f}(x) = x^{\{2\beta\}}$, where $\{2\beta\} = 2\beta-\lfloor 2\beta \rfloor$, so that $f(x) = x^{\lfloor 2\beta\rfloor}\hat{f}(x)$. Let $\hat{r}_h(x) = \frac{p(x)}{q(x)}$ be the $L_{\infty}$-best approximation of $\hat{f}(x)$ on $J_h$, and define $r_h(x) = x^{\lfloor 2\beta\rfloor}\hat{r}_h(x)$. Then, we have the following bound:
	\begin{eqnarray}
		\|L_h^{-2\beta}\Pi_h - L_{h,m}^{-2\beta}\Pi_h\|_{\mathcal{L}_2(L_2(\mathcal{D}))}^2 
		&=&
		\sum_{j = 1}^{n_h} \|L_h^{-2\beta} e_{j,h} -L_{h,m}^{-2\beta} e_{j,h} \|_{L_2(\mathcal{D})}^2 
		=
		\sum_{j = 1}^{n_h} (\lambda_{j,h}^{-2\beta}-r_h(\lambda_{j,h}^{-1}))^2
		\nonumber \\
		&\leq&
		\label{bound33}
		n_h \max\limits_{1 \leq j \leq n_h} \lvert \lambda_{j,h}^{-2\beta}-r_h(\lambda_{j,h}^{-1}) \rvert^2.
	\end{eqnarray}
	We now apply  \cite[Theorem 1]{Stahl}, and observe that  $x^{\lfloor 2\beta \rfloor} \leq 1$ on $J_h$, to obtain:
	\begin{equation}
		\max\limits_{1 \leq j \leq n_h} \lvert \lambda_{j,h}^{-2\beta}-r(\lambda_{j,h}^{-1}) \rvert
		\leq 
		\sup\limits_{x \in J_h} \lvert f(x)-r(x)\rvert 
		\leq 
		\sup\limits_{x \in [0,1]} \lvert \hat{f}(x)-\hat{r}(x)\rvert
		\lesssim 
		\label{bound36}
		e^{-2\pi\sqrt{\{2\beta\}  m}}.
	\end{equation}
	Thus, by \eqref{bound33} and \eqref{bound36}, we have
	$\|L_h^{-2\beta}\Pi_h - L_{h,m}^{-2\beta}\Pi_h\|_{\mathcal{L}_2(L_2(\mathcal{D}))} \lesssim n_h^{1/2}e^{-2\pi\sqrt{\{2\beta\}  m}}$ and by item 3 of Proposition \ref{eigenval}, we obtain
	$n_h^{1/2}e^{-2\pi\sqrt{\{2\beta\}  m}} \lesssim h^{-d/2}e^{-2\pi\sqrt{\{2\beta\}  m}}$. This source of error only occurs if we need the rational approximation, i.e., if $2\beta \notin \mathbb{N}$. Thus, combining this with the bound $\|L^{-2\beta} - L_h^{-2\beta}\Pi_h\|_{\mathcal{L}_2(L_2(\mathcal{D}))} \lesssim_{\varepsilon,\beta, \boldsymbol{H}, \kappa,\mathcal{D}} h^{\min\{4\beta-d/2 -\varepsilon,2\}}$ from Proposition~\ref{cov_fem_approx_rate}, yields:
	\begin{equation*}
		\|\varrho_{h,m}^\beta - \varrho^\beta\|_{\mathcal{L}_2(\mathcal{D}\times\mathcal{D})} 
		\lesssim_{\varepsilon,\beta, \boldsymbol{H}, \kappa,\mathcal{D}}
		\mathbbm{1}_{2\beta \notin \mathbb{N}}h^{-d/2}e^{-2\pi\sqrt{\{2\beta\}  m}} + h^{\min\{4\beta-d/2 -\varepsilon,2\}}.
	\end{equation*}
\end{proof}

\section{Derivation of the GMRF representation}\label{covmatrix}
In this section, we derive equation \eqref{cov_matrix_rational2}. Recall 
the rational approximated covariance operator in 
\eqref{cov_operator_sum_form2}: $L_{h,m}^{-2\beta} =
L_h^{-\lfloor 2\beta \rfloor} (\sum_{i=1}^{m}  r_i  
(L_h-p_i I_{V_h})^{-1} +k I_{V_h})$, where $L_h$ was defined in 
Section \ref{finite_element}, $L_{h,m}^{-2\beta}$ was defined in 
Section \ref{rationalapprox} and $I_{V_h}$ is the identity map on
the finite element space $V_h$. The first part of this expression is the 
sum of the terms of the form $r_i L_h^{-\lfloor 2\beta \rfloor} 
(L_h-p_i I_{V_h})^{-1}, i = 1,...,m$, whereas the second part is 
$k  L_h^{-\lfloor 2\beta \rfloor}$. Since $\{r_i\}_{i = 1}^m$ and $k$ 
are positive and $\{p_i\}_{i = 1}^m$ are negative real numbers, 
$\{r_i L_h^{-\lfloor 2\beta \rfloor} (L_h-p_i I_{V_h})^{-1}\}_{i=1}^m$ 
and $k  L_h^{-\lfloor 2\beta \rfloor}$ are positive-definite. 
They are also self-adjoint, and thus valid 
covariance operators.

We will deal with each term in the partial fractions expansion separately. 
We begin with the terms of the form $r L_h^{-\lfloor 2\beta\rfloor} 
(L_h-p I_{V_h})^{-1}$. Observe that this term is the covariance operator of 
the solution of the SPDE
$r^{-1/2}(L_h-p I_{V_h})^{1/2} L_h^{\lfloor 2\beta\rfloor/2} x = 
\mathcal{W}_h.$
If $\lfloor 2\beta\rfloor$ is odd, $\lfloor 2\beta\rfloor = 2n+1$, 
with $n\in\mathbb{N}$, we can rewrite the equation as
$r^{-1/2}((L_h-p I)L_h)^{1/2}L_h^{n} x = \mathcal{W}_h$, or equivalently
\begin{align}
	\label{twostep1}
	r^{-1/2}\hat{L}^{1/2} z = \mathcal{W}_h, \\
	\label{twostep2}
	L_h^n x= z,
\end{align}
where $\hat{L} = (L_h-p I_{V_h})L_h$ and $z \in V_h$ (see Section 
\ref{finite_element} for the definition of $V_h$). 

Let $\{\varphi_j\}_{j = 1}^{n_h}$ be the finite element basis of $V_h$. 
We can write $z$ in the finite element basis as $z  = \sum_{j = 1}^{n_h} 
z_j \varphi_j$. Similarly, we have that $x=\sum_{j=1}^{n_h} x_j \varphi_j$. 
Let us now obtain a relation between $\boldsymbol{z}=[z_1,...,z_{n_h}]^\top$ 
and $\boldsymbol{x}=[x_1,...,x_{n_h}]^\top$. Observe that, for each 
$l = 1,...,n_h$, we have $(z, \varphi_l)_{L_2(\mathcal{D})} = 
\sum_{j = 1}^{n_h} z_j(\varphi_j, \varphi_l)_{L_2(\mathcal{D})}$. However, 
by \eqref{twostep1} and \eqref{twostep2}, we also have $(z, 
\varphi_l)_{L_2(\mathcal{D})} = ( L_h^n x, \varphi_l )_{L_2(\mathcal{D})} 
=  \sum_{j = 1}^{n_h} x_j (  L_h^n \varphi_j, \varphi_l )_{L_2(\mathcal{D})}$. 
Let us now compute $(  L_h^n \varphi_j, \varphi_l )_{L_2(\mathcal{D})}$. 
To this end, let $\boldsymbol{B}$ be the matrix of the operator $L_h$ 
in the basis $\{\varphi_i\}_{i = 1}^{n_h}$ so that 
$ \varphi_j = \sum_{k = 1}^{n_h}\boldsymbol{B}_{j,k}\varphi_k$.  
Thus $(  L_h \varphi_j, \varphi_l ) _{L_2(\mathcal{D})} = \sum_{k = 1}^{n_h} 
\boldsymbol{B}_{j,k}(\varphi_k,\varphi_l)_{L_2(\mathcal{D})}$. Let, also, 
$\boldsymbol{L}_{j,l}: = a_L(\varphi_j, \varphi_l) = (L_h\varphi_j, 
\varphi_l)$ (recall the bilinear form $a_L(\cdot,\cdot)$ from Section 
\ref{sec:continuous_model}) and $\boldsymbol{C}_{j,l} = (\varphi_j,
\varphi_l)_{L_2(\mathcal{D})}$ (Both $\boldsymbol{L}$ and $\boldsymbol{C}$ 
are symmetric). Then, $\boldsymbol{B} = \boldsymbol{L}\boldsymbol{C}^{-1}$, 
and
\begin{align*}
	( L_h^n \varphi_j, \varphi_l ) _{L_2(\mathcal{D})} 
	&= 
	(  L_h^{n-1}(L_h \varphi_j), \varphi_l ) _{L_2(\mathcal{D})}
	=
	(  L_h^{n-1}\sum_{k = 1}^{n_h} \boldsymbol{B}_{j,k}\varphi_k, \varphi_l ) _{L_2(\mathcal{D})} \\
	&=
	\sum_{k = 1}^{n_h} \boldsymbol{B}_{j,k} (L_h^{n-1}\varphi_k,\varphi_l)_{L_2(\mathcal{D})}.
\end{align*}
The relation $\boldsymbol{z} = (\boldsymbol{C}^{-1}\boldsymbol{L})^n \boldsymbol{x}$ now follows by induction (the base case is $\boldsymbol{L}_{j,l} = (L_h\varphi_j, \varphi_l)$) since $(  L_h^n \varphi_j, \varphi_l ) _{L_2(\mathcal{D})} = [\boldsymbol{B}^{n-1}\boldsymbol{L}]_{j,l} = [(\boldsymbol{L}\boldsymbol{C}^{-1})^{n-1}\boldsymbol{L}]_{j,l}$. 

We are now ready to obtain the distribution of $\boldsymbol{x}$. Note that 
$\hat{L}^{1/2}: V_h\to V_h$ is an isomorphism: By the coerciveness of bilinear 
form $a_L(\cdot,\cdot)$ from Section \ref{sec:continuous_model}, all the 
eigenvalues of $L$ are positive. By item 2 from Proposition \ref{eigenval}, 
all the eigenvalues of $L_h$ are positive as well. This means $L_h$ is a 
positive-definite operator. Since $p$ is a negative real number, $L_h - 
pI_{V_h}$ is a positive-definite operator. Further, $\hat{L}$ is symmetric 
and product of positive-definite matrices, thus by 
\cite[Corollary 7.6.2]{horn2012matrix}, $\hat{L}$ is also positive-definite. 
Therefore, $\hat{L}^{1/2}$ is positive-definite, and since $V_h$ is a finite dimensional space, $\hat{L}^{1/2}: V_h\to V_h$ is an isomorphism. This means that  $V_h = span\{\hat{L}^{1/2}\varphi_j \}_{j = 1}^{n_h}$. Hence, the weak form of \eqref{twostep1} can be written as:
\begin{equation}
	\label{twostep1weakform}
	r^{-1/2}\sum_{j = 1}^{n_h} z_j( \hat{L}^{1/2}\varphi_j, \hat{L}^{1/2}\varphi_l ) _{L_2(\mathcal{D})}  = (\mathcal{W}_h,  \hat{L}^{1/2}\varphi_l )_{L_2(\mathcal{D})},\quad l = 1,...,n_h.
\end{equation}
Define $\widehat{\boldsymbol{L}} = \boldsymbol{L}\boldsymbol{C}^{-1}\boldsymbol{L}-p\boldsymbol{L}$. Then,
by the identity $\boldsymbol{z} = (\boldsymbol{C}^{-1}\boldsymbol{L})^n \boldsymbol{x}$, the self-adjointness of $\hat{L}^{1/2}$ and $(  L_h^2 \varphi_j, \varphi_l ) _{L_2(\mathcal{D})} = [\boldsymbol{L}\boldsymbol{C}^{-1}\boldsymbol{L}]_{j,l} $, the sum in the left hand side of \eqref{twostep1weakform} is
\begin{align}
	\sum_{j = 1}^{n_h} z_j( \hat{L}^{1/2}\varphi_j, \hat{L}^{1/2}\varphi_l ) _{L_2(\mathcal{D})}
	&= 
	\sum_{j,k = 1}^{n_h}  [(\boldsymbol{C}^{-1}\boldsymbol{L})^n]_{j,k} x_k   ( \hat{L}\varphi_j, \varphi_l ) _{L_2(\mathcal{D})} 
	=
	\sum_{j,k = 1}^{n_h} [(\boldsymbol{C}^{-1}\boldsymbol{L})^n]_{j,k} x_k \widehat{\boldsymbol{L}}_{j,l} \nonumber\\
	&= 
	\sum_{k = 1}^{n_h} x_k \sum_{j = 1}^{n_h} \widehat{\boldsymbol{L}}_{l,j} [(\boldsymbol{C}^{-1}\boldsymbol{L})^n]_{j,k} 
	= 
	\label{lhs}
	\sum_{k = 1}^{n_h}  [ \widehat{\boldsymbol{L}} (\boldsymbol{C}^{-1}\boldsymbol{L})^n  ]_{l,k}  x_k. 
\end{align}
Let $\boldsymbol{W} = [(\mathcal{W}_h,  \hat{L}^{1/2}\varphi_1 )_{L_2(\mathcal{D})},...,(\mathcal{W}_h,  \hat{L}^{1/2}\varphi_{n_h} )_{L_2(\mathcal{D})}]^\top$. Since $\mathcal{W}_h$ is white noise in $V_h$, we have $\boldsymbol{W} \sim N(\boldsymbol{0},\widehat{\boldsymbol{L}})$. By \eqref{twostep1weakform} and \eqref{lhs}, $\boldsymbol{x} =  r^{1/2}(\boldsymbol{L}^{-1}\boldsymbol{C})^n \widehat{\boldsymbol{L}}^{-1} \boldsymbol{W} $. Thus, the covariance matrix of $\boldsymbol{x}$ is $r(\boldsymbol{L}^{-1}\boldsymbol{C})^n \widehat{\boldsymbol{L}}^{-1} (\boldsymbol{C}\boldsymbol{L}^{-1})^n$, which also can be written as $r(\boldsymbol{L}^{-1}\boldsymbol{C})^{\lfloor 2\beta \rfloor  }(\boldsymbol{L}-p\boldsymbol{C})^{-1}$. Therefore, $\boldsymbol{x} \sim N(\boldsymbol{0},r(\boldsymbol{L}^{-1}\boldsymbol{C})^{\lfloor 2\beta \rfloor }(\boldsymbol{L}-p\boldsymbol{C})^{-1})$.

If $\lfloor 2\beta \rfloor$ is even, say $\lfloor 2\beta \rfloor = 2n$, with $n$ a non-negative integer ($\lfloor 2\beta \rfloor$ can be 0), we can write the SPDE as
$r^{-1/2}(L_h-p I_{V_h})^{1/2} L_h^n x_h = \mathcal{W}_h.$
In fact, this is a subcase of the previous case. One can simply change the 
$\hat{L}$ to $(L_h-p I_{V_h})$ and the procedure follows similarly. The distribution of $\boldsymbol{x}$ in the case is still $\boldsymbol{x} \sim N(\boldsymbol{0},r(\boldsymbol{L}^{-1}\boldsymbol{C})^{\lfloor 2\beta \rfloor}(\boldsymbol{L}-p\boldsymbol{C})^{-1})$.  

For the second term in \eqref{cov_matrix_rational2}, $k  L_h^{-\lfloor 2\beta \rfloor}$, the corresponding SPDE is
$k^{-1/2}L_h^{-\lfloor 2\beta \rfloor/2} x_h = \mathcal{W}_h.$
Considering again the two cases when $\lfloor 2\beta \rfloor$ is odd or even separately, the derivation follows similarly as above. In both of these cases,  $\boldsymbol{x} \sim N(\boldsymbol{0},k(\boldsymbol{L}^{-1}\boldsymbol{C})^{\lfloor 2\beta \rfloor -1}  \boldsymbol{L}^{-1} )$.
To conclude, observe that we obtained the distribution of each $\boldsymbol{x}_i$ in \eqref{stoc_weights_sum_representaion} for $i = 1,..,m+1$. Therefore, this proves \eqref{cov_matrix_rational2}.

\section{Finite element basis functions}\label{sec:FEM_basis_func}
In this section, we provide explicit forms of the continuous piecewise linear finite element basis functions $\{\varphi_j\}_{j=1}^{n_h}$ mentioned in Sections \ref{finite_element} and \ref{GMRF}.
First, we divide the computational domain $\mathcal{D}$ with a triangle mesh and we call each small triangle an element.
Second, we associate each mesh node to a piecewise linear and continuous basis function. 
The basis function  takes the value $1$ at that node, decreases linearly to the value $0$ at all the neighboring mesh nodes and takes the value $0$ constantly elsewhere on the domain. 

In practice, the basis functions are first defined for a reference element and then mapped to a physical element on mesh. For example,
we can consider a one dimensional domain $\mathcal{D}$,
say $\mathcal{D} = [a,b] \in \mathbb{R}$.
Let 
$ a = x_1 < \cdot\cdot\cdot <x_{n_h} = b$
be a partition of the domain. 
Each sub--interval $[x_i,x_{i+1}],\ i = 1,...,n_h-1$ is an element. We can choose the reference element as $[0,1]$ and define two basis functions on this element as $\varphi_{r,1}(X) = 1-X$ and $\varphi_{r,2}(X) = X$ for $X\in[0,1]$.
Through a change of variables $x = x_i + (x_{i+1}-x_i)X$ for $x\in [x_i,x_{i+1}], i = 1,...,n_h-1$, we can find the basis functions defined on the mesh. For the interior nodes, we have 
$$
\varphi_i(x)=\left \{
\begin{array}{lcl}
	\frac{x - x_{i-1}}{x_i-x_{i-1}},\ x_{i-1}\leq x \leq x_i,\\
	\frac{x_{i+1} - x}{x_{i+1} - x_i},\ x_i\leq x \leq x_{i+1},\ i = 2,...,n_h-1,\\
	0,\ \text{otherwise},
\end{array} \right. 
$$
and for the two boundary nodes we have 
$$
	\varphi_1(x)=\left \{
	\begin{array}{lcl}
		\frac{x_2 - x}{x_2-x_1},\ x_{1}\leq x \leq x_2,\\
		0,\ \text{otherwise},
	\end{array} \right. \quad
	\varphi_{n_h}(x)=\left \{
	\begin{array}{lcl}
		\frac{x - x_{n_h-1}}{x_{n_h}-x_{n_h-1}},\ x_{n_h-1}\leq x \leq x_{n_h},\\
		0,\ \text{otherwise}.
	\end{array} \right. 
$$

This type of basis functions is also referred to as hat functions. The basis functions in higher dimensional spaces generalize naturally from the one dimensional case. Figure~\ref{fig:FEM_2d_basis_func} shows an illustration of a basis function on a two dimensional domain.
In the case of Dirichlet boundary conditions, we remove all basis functions centered at mesh nodes on the boundary, so that the FEM approximation satisfies the Dirichlet boundary conditions.
\begin{figure}[t]
	\centering
	\includegraphics[width=0.6\textwidth]{./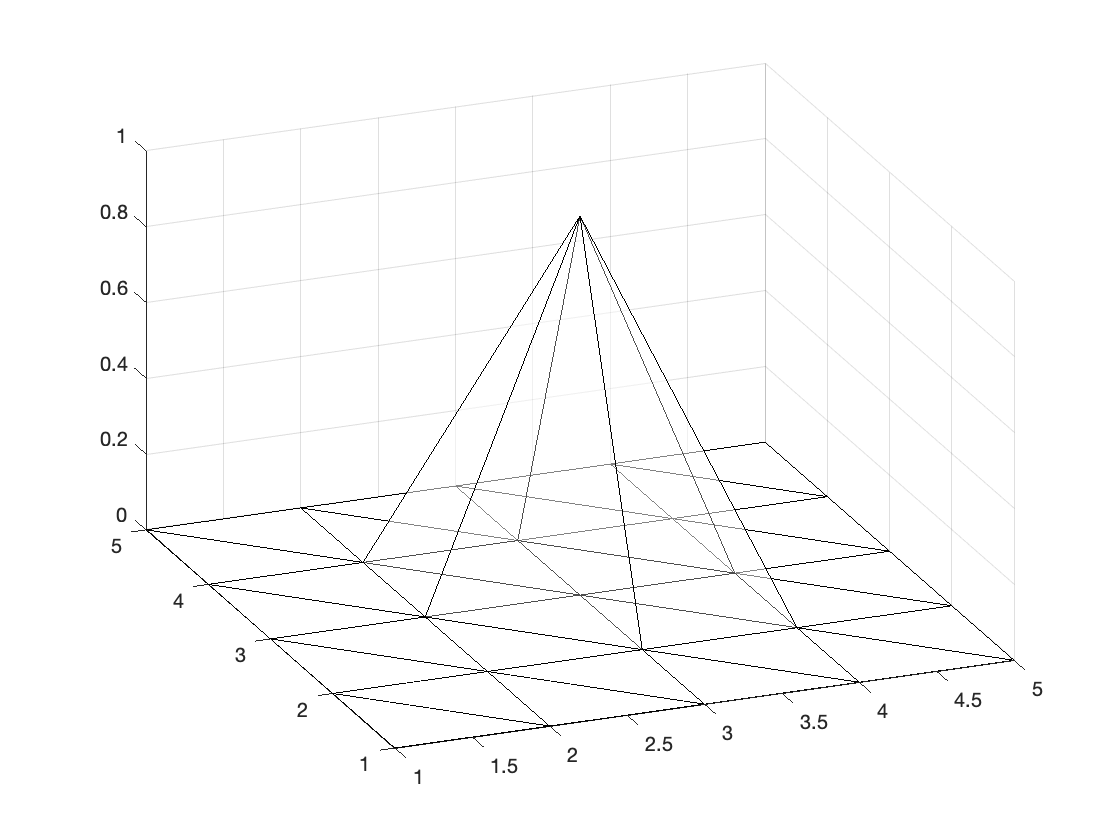}
	\caption{
		Example of a finite element basis function on a mesh in two dimensions. 
	}
	\label{fig:FEM_2d_basis_func}
\end{figure}

\section{Likelihood evaluation and posterior sampling}\label{sec:algorithm}
It is computationally efficient to evaluate the likelihood in \eqref{marginal_likelihood} since $\boldsymbol{Q}$, $\boldsymbol{Q}_{\boldsymbol{X}\vert\boldsymbol{y}}$ and $\boldsymbol{Q}_{\pmb{\epsilon}}$ are sparse.
Specifically, to compute the marginal likelihood \eqref{marginal_likelihood}, Algorithm~\ref{alg1} can be used. 

\begin{algorithm}[h]
	\renewcommand{\algorithmicrequire}{\textbf{Input:}}
	\renewcommand{\algorithmicensure}{\textbf{Output:}}
	\caption{Marginal likelihood computation}
	\label{alg1}
	\begin{algorithmic}[1]
		\STATE Assemble $\overline{\boldsymbol{A}}$, $\boldsymbol{Q}$ and $\boldsymbol{Q}_{\pmb{\epsilon}}$.
		\STATE Compute $\boldsymbol{Q}_{\boldsymbol{X}\vert\boldsymbol{y}}=\overline{\boldsymbol{A}}^\top\boldsymbol{Q}_{\pmb{\epsilon}}\overline{\boldsymbol{A}}+\boldsymbol{Q}$ and 
		$	\pmb{\mu}_{\boldsymbol{X}\vert\boldsymbol{y}}=\boldsymbol{Q}^{-1}_{\boldsymbol{X}\vert\boldsymbol{y}}\overline{\boldsymbol{A}}^\top\boldsymbol{Q}_{\pmb{\epsilon}}\boldsymbol{y}$, where  $\pmb{\mu}_{\boldsymbol{X}\vert\boldsymbol{y}}$ is computed by solving $\boldsymbol{Q}_{\boldsymbol{X}\vert\boldsymbol{y}}\pmb{\mu}_{\boldsymbol{X}\vert\boldsymbol{y}}=\overline{\boldsymbol{A}}^\top\boldsymbol{Q}_{\pmb{\epsilon}}\boldsymbol{y}$ for $\pmb{\mu}_{\boldsymbol{X}\vert\boldsymbol{y}}$ with 
		\citet[][Algorithm 2.1]{rue2005gaussian}.
		\STATE Compute $\log{|\boldsymbol{Q}|}$ by exploiting sparsity of $\boldsymbol{Q}$.
		First, compute the Cholesky decomposition of $\boldsymbol{Q}$: $\boldsymbol{Q} = \boldsymbol{L}\boldsymbol{L}^\top$.
		Then, compute $\log{|\boldsymbol{Q}|} = \log{|\boldsymbol{L}\boldsymbol{L}^\top|} = 2\log{|\boldsymbol{L}|} = 2\sum_i \log{\boldsymbol{L}_{ii}}$
		where $\boldsymbol{L}_{ii}$ denotes $i$th diagonal element of $\boldsymbol{L}$.
		\STATE Compute $\log{|\boldsymbol{Q}_{\boldsymbol{X}\vert\boldsymbol{y}}|}$ and $\log{\boldsymbol{Q}_{\pmb{\epsilon}}}$ in the same way as Step 3.
		\STATE Compute the likelihood of $\boldsymbol{y}$ by using \eqref{marginal_likelihood}
	\end{algorithmic}  
\end{algorithm}
Similarly, samples from predictive distributions of the latent field can be obtained effectively via Algorithm~\ref{alg2}.

\begin{algorithm}[h]
	\renewcommand{\algorithmicrequire}{\textbf{Input:}}
	\renewcommand{\algorithmicensure}{\textbf{Output:}}
	\caption{Predictive distribution sampling}
	\label{alg2} 
	\begin{algorithmic}[1]
		\REQUIRE Locations $\textbf{s}_1,\cdots,\textbf{s}_N$ where $u(\textbf{s})$ should be sampled.
		\STATE Assemble $\boldsymbol{Q}$ and $\boldsymbol{Q}_{\pmb{\epsilon}}$.
		\STATE Compute $\boldsymbol{Q}_{\boldsymbol{X}\vert\boldsymbol{y}}$ and $\pmb{\mu}_{\boldsymbol{X}\vert\boldsymbol{y}}$ 
		in the same way as Step 2 in Algorithm \ref{alg1}.
		\STATE Use $\boldsymbol{Q}_{\boldsymbol{X}\vert\boldsymbol{y}}$ and $\pmb{\mu}_{\boldsymbol{X}\vert\boldsymbol{y}}$ to sample $\boldsymbol{X}\vert\boldsymbol{y}$
		by following \citet[][Algorithm 2.4]{rue2005gaussian}.
		\STATE Construct a projection matrix $\overline{\boldsymbol{A}}_{\text{new}}$ for the locations $\textbf{s}_1,\cdots,\textbf{s}_N$.
		\STATE Return $\overline{\boldsymbol{A}}_{\text{new}}\boldsymbol{X}\vert\boldsymbol{y}$ as a sample from $\pi(u(\textbf{s}_1),\cdots,u(\textbf{s}_N)\vert\boldsymbol{y})$. 
	\end{algorithmic}  
\end{algorithm}

\section{Ideas of rational approximation algorithms}\label{sec:algo_idea}

In this section, we will briefly introduce the ideas of the BRASIL algorithm and the Clenshaw-Lord Chebyshev-Pad\'e algorithm that were mentioned in Sections~\ref{rationalapprox} and \ref{GMRF}.

The idea of the BRASIL algorithm is that one can achieve the best rational approximation of a continuous function on a compact interval $[a,b] \in \mathbb{R}$
by interpolating a certain number, depending on the degree of the rational function, of points such that the maximum error of the approximation in each sub--interval divided by those points are equal.
The BRASIL algorithm first initializes a partition of the interval $[a,b]$ by a set of points, then uses the barycentric rational interpolation on those points, and adjusts iteratively the partition so that
the maximum absolute errors in each sub--interval are approximately equal. See \citet[][Section 3]{hofreither2021algorithm} for a complete description of the algorithm.

The Clenshaw-Lord Chebyshev-Pad\'e algorithm approximates the target function by a combination of a Pad\'e approximation and a Chebyshev series. Pad\'e approximation consists of approximating a target function $f$ by a rational function $R_{[m/n]}$, with degree $m$ and $n$ for the numerator and denominator polynomials, respectively. The coefficients of the polynomials are computed so that the derivatives at $0$ agree with the derivatives of the target function up to the highest possible order. That is, $f^{(k)}(0) = R^{(k)}_{m/n}(0)$ for $k = 0,...,m+n$. Now, for any continuous function $f$ on interval $[-1,1]\in \mathbb{R}$, there is a unique Chebyshev series,
which has the form $f(x) = \sum_{k = 0}^{\infty}a_kT_k(x)$, that converges uniformly to the function $f$. Here, $\{a_k\}_k$ are called the Chebyshev coefficients and $\{T_k\}_k$ are the Chebyshev polynomial of the first kind. $\{T_k\}_k$ are defined from the recurrence relation: $T_0(x) = 1,$ $T_1(x) = x,$ and $T_{n+1}(x) = 2xT_n(x) - T_{n-1}(x),$ for $x\in[-1,1]$.
$a_k$ can be computed by $a_k = 2/\pi\int_{-1}^{1}\frac{f(x)T_k(x)\text{d}x}{\sqrt{1-x^2}}$.
One can obtain the Chebyshev series of a continuous function on a compact interval $[a,b]$ through a change of variables. To approximate a continuous function on an interval $[a,b]$, the 
Clenshaw-Lord Chebyshev-Pad\'e algorithm first expands a continuous function on $[a,b]$ with its (truncated) Chebyshev series, then uses Pad\'e approximation to approximate the series.

The two algorithms compute the best or the near best coefficients of rational approximation in the sense of $L_\infty$-norm.
The main reason for computing the coefficients in this way is that we by  \cite{Stahl} then have an explicit rate of convergence of the error, which allows us to compute the explicit bounds for the covariance error. If we only had a bound in the $L^2$-norm (say), it would be less clear how to use that in the theoretical analysis. Further, as far as we know, there are no known methods for obtaining optimal rational approximations with respect to other norms that have known rates of convergence. 

There are other methods for computing rational approximations of fractional powers of elliptic operators, based on alternative representations of the fractional power. One example is the method of \citet{bonito_approximation} which was applied in \citet{bolin2020numerical}. That method, however, has a much higher error for low orders of the rational approximation \citep{rational_spde}, and is therefore not suitable in our context. 

\section{Further numerical experiments}\label{sec:plot_likelihooderror}
In this section, we provide some additional details on the numerical experiments and provide some plots of the absolute relative errors of the likelihood shown in \eqref{marginal_likelihood} with different orders of the rational approximations.

First we describe how we approximate the ${L_2([0,1]^2\times [0,1]^2)}$-norm and the supremum norm on $[0,1]^2\times [0,1]^2$. In order to approximate these norms we first need to build some matrices induced by the covariance operators. First, denote by $\{\boldsymbol{s}_i\}_{i = 1}^{N^2}$ the locations of the mesh nodes. 
For two continuous functions $\rho,\hat{\rho}:[0,1]^2\times [0,1]^2 \to \mathbb{R}$, let $\boldsymbol{\Sigma}$ and $\hat{\boldsymbol{\Sigma}}$ be $N^2\times N^2$ matrices with corresponding $(i,j)$th elements given by $\boldsymbol{\Sigma}(i,j) = \rho(\boldsymbol{s}_i,\boldsymbol{s}_j)$ and $\hat{\boldsymbol{\Sigma}}(i,j) = \hat{\rho}(\boldsymbol{s}_i,\boldsymbol{s}_j)$, respectively. The $L_2([0,1]^2\times [0,1]^2)$-distance between $\rho$ and $\hat{\rho}$ can be can be approximated, on this regular mesh, by the following quadrature:
\begin{equation}\label{eq:L2DDnorm}
	\|\rho-\hat{\rho}\|_{L_2([0,1]^2\times [0,1]^2)} \approx \sqrt{\frac{1}{N^4} \sum_{i=1}^{N^2} \sum_{j=1}^{N^2} \left(\rho(\boldsymbol{s}_i,\boldsymbol{s}_j)-\hat{\rho}(\boldsymbol{s}_i,\boldsymbol{s}_j)\right)^2} = \frac{1}{N^2} \|\boldsymbol{\Sigma}-\hat{\boldsymbol{\Sigma}}\|_F,
\end{equation}
where $\|\cdot \|_F$ stands for the Frobenius norm. Similarly, we can approximate the supremum distance between $\rho$ and $\hat{\rho}$ by the max-distance on the corresponding matrices:
\begin{equation}\label{eq:maxnormsup}
	\|\rho-\hat{\rho}\|_{L_{\infty}([0,1]^2\times [0,1]^2)} \approx \max_{i,j} |\rho(\boldsymbol{s}_i,\boldsymbol{s}_j) - \hat{\rho}(\boldsymbol{s}_i,\boldsymbol{s}_j)| = \|\boldsymbol{\Sigma}-\hat{\boldsymbol{\Sigma}}\|_{\max},
\end{equation}
where $\|\cdot\|_{\max}$ stands for the max norm.
Thus, to approximate the errors, we just need to assemble the true covariance matrix and the covariance matrix of the approximation. Let us now describe how this is done. 
To this end, fix some smoothness parameter $\nu>0$, and let $\beta = \nu/2+d/4$. We build the covariance matrix $\boldsymbol{\Sigma}^{\beta}$, of size $N^2\times N^2$, associated to the true covariance function by setting its $(i,j)$th element to be $\boldsymbol{\Sigma}^{\beta}_{i,j} = \varrho_u^\beta(\boldsymbol{s}_i, \boldsymbol{s}_j)$, where $\varrho_u^\beta$ is given in \eqref{eq:foldedmatern}. In practice, we truncate the sum in \eqref{eq:foldedmatern} to a sufficiently large range of $\mathbf{k} \in \mathbb{Z}^2$. Let $\boldsymbol{Q}_{I,\beta}$ be the precision matrix obtained from INLA's method of general smoothness, with corresponding covariance matrix  $\boldsymbol{\Sigma}_{I}^{\beta} = \boldsymbol{Q}_{I,\beta}^{-1}$. Now, fix some order $m$ for the rational approximation and let $\boldsymbol{Q}_{m,O,\beta}$ be the precision matrix from the operator-based rational approximation of order $m$. The covariance matrix associated to the operator-based rational approximation is given by $\boldsymbol{\Sigma}_{O,m}^{\beta} = \boldsymbol{Q}_{m,O,\beta}^{-1}$. Finally, let $\boldsymbol{Q}_{m,C,\beta}$ be the precision matrix given by \eqref{Q}. The corresponding covariance matrix is then given by $\boldsymbol{\Sigma}_{C,m}^{\beta} = \overline{\boldsymbol{I}} \boldsymbol{Q}_{m,C,\beta}^{-1} \overline{\boldsymbol{I}}^\top$, where  $\overline{\boldsymbol{I}}$ is a block matrix of size $N^2\times N^2(m+1)$ obtained by combining $m+1$ copies of the $N^2\times N^2$ identity matrix $\boldsymbol{I}_{N^2}$ as  $\overline{\boldsymbol{I}}=\begin{bmatrix}\boldsymbol{I}_{N^2}&\cdots&\boldsymbol{I}_{N^2}\end{bmatrix}.$ 
\begin{figure}[t]
	\centering 
	\includegraphics[width=0.865\textwidth]{./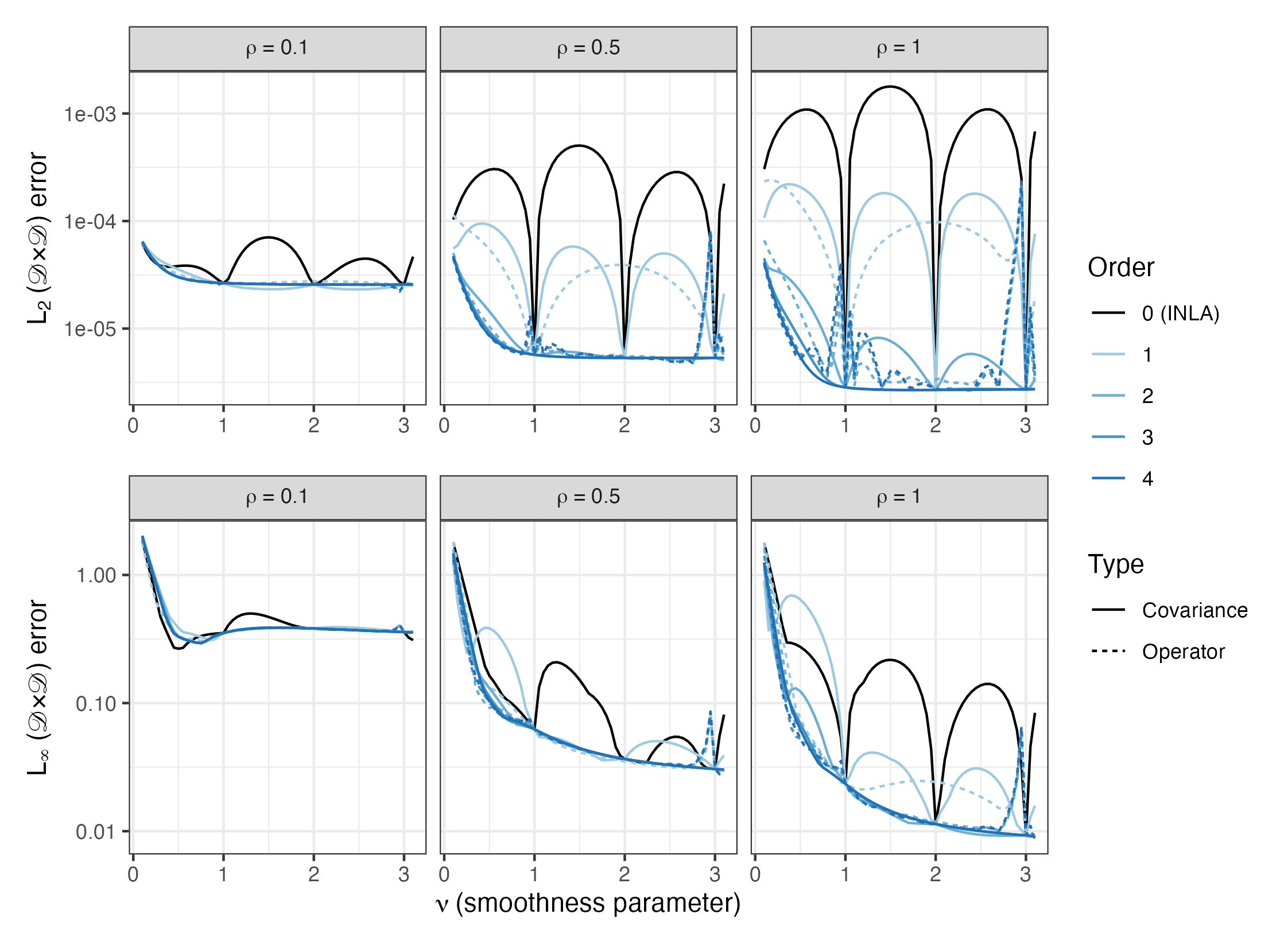}
	\caption{Errors in $L_2(\mathcal{D}\times \mathcal{D})$-norm (top) and supremum norm ($L_{\infty}(\mathcal{D}\times \mathcal{D})$) (bottom) on $\mathcal{D} = [0,1]^2$ for different practical ranges $\rho$ for different values of $\nu$. All methods use the same FEM mesh, with $50$ equally spaced nodes in each direction.}
	\label{fig:L2normCov50}
\end{figure}

The results of the covariance error for the coarser FEM with $50$ equally spaced nodes on each axis can be seen in Figure~\ref{fig:L2normCov50}.
We now consider similar a comparison for the likelihood errors of the different methods. 
For the comparison, we generate 1000 sets of samples, where each contains 1000 observations on $\mathcal{D} = [0,1]^2$ generated from \eqref{hiermodel} where $u$  has covariance function \eqref{eq:foldedmatern}.
For each set of samples $\mathbf{y}_i$, we compute the true log-likelihood value $\ell(\mathbf{y}_i)$ and the approximation $\hat{\ell}(\mathbf{y}_i)$ for each of the three methods, and finally store the absolute relative error $|1-\hat{\ell}(\mathbf{y}_i)/\ell(\mathbf{y}_i)|$. The median of the 1000 absolute relative errors for the three methods are presented in log scale in  Figure~\ref{fig:likelihood_error}.
\begin{figure}[t]
	\centering
	\includegraphics[width=0.865\textwidth]{./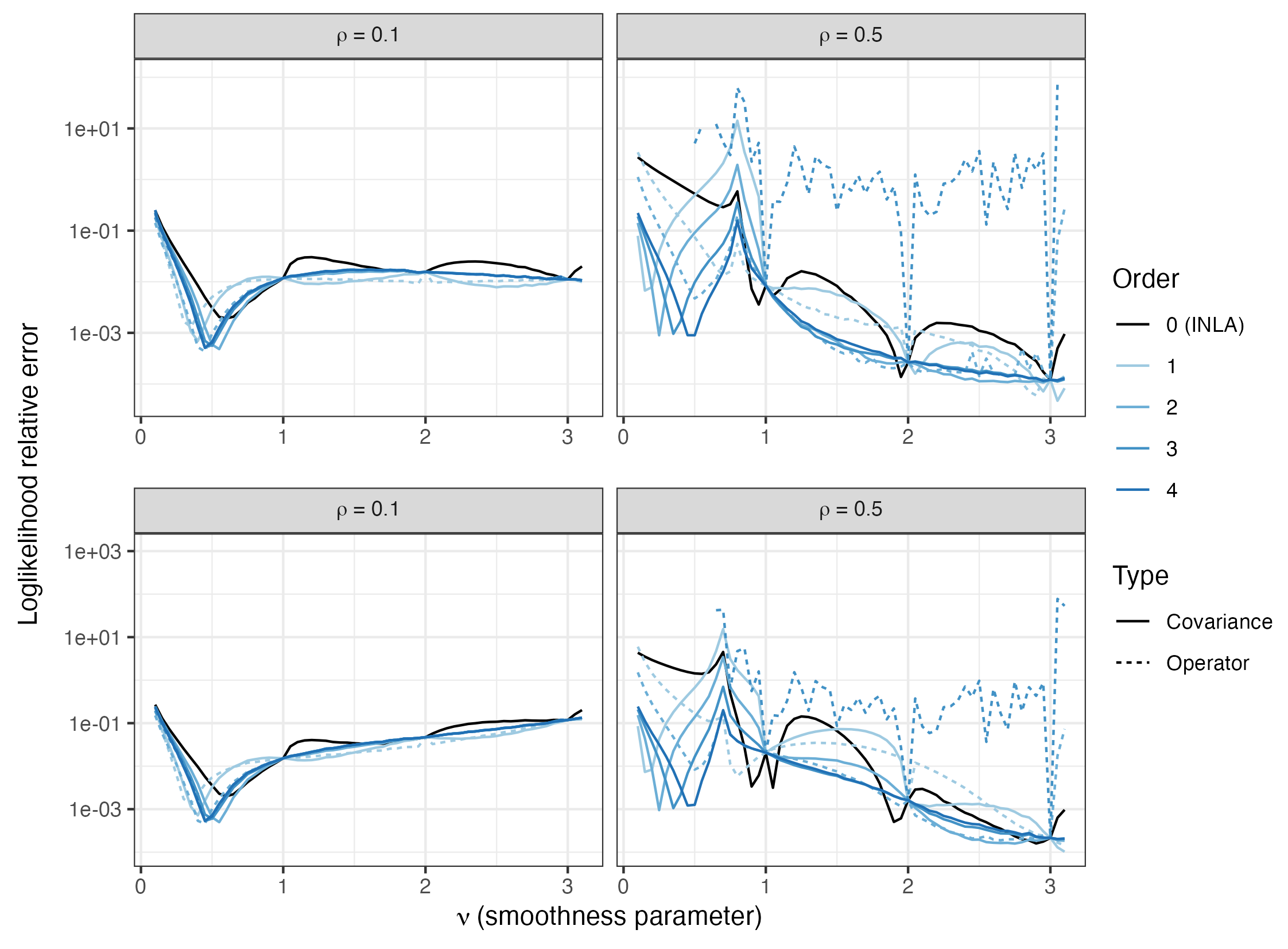}
	\caption{
		The log-scaled log-likelihood errors, where $\rho$ is the range parameter and the standard deviation of the measurement noise is 0.1 (top) and 0.01 (bottom). All the methods use the same FEM mesh, with $100$ equally spaced nodes in each direction.
	}
	\label{fig:likelihood_error}
\end{figure}

We can note that the error tends to decrease when the order of the rational approximation, $m$, increases.
Recall that the error for $\nu\in\mathbb{N}$ solely comes from the FEM error, so we can see that there is no need for a large $m$ to obtain an error which is on the same scale as the FEM error. 
In fact, the likelihood error for integer $\nu$ and non-integer $\nu$ are quite similar as long as $m\geq 2$. This means that we essentially have the same likelihood error for a general $\nu$ with our method as the standard SPDE approach has for integer values of $\nu$ (where the error only comes from the FEM discretization).
Finally, we can also note that the covariance--based method has better numerical stability with respect to $m$ compared with the operator--based method. 
More comparisons can be found the \texttt{Shiny} app at \url{https://github.com/davidbolin/rSPDE}.
\end{appendix}

\section*{Acknowledgement}
Our sincere thanks to Elias T. Krainski and H{\aa}vard Rue for their help with explaining some details of the internal structure of the \texttt{R-INLA} software
and to the anonymous reviewers for insightful comments and suggestions on the article.

\bibliographystyle{chicago}
\bibliography{reference_list}

\end{document}